\newtheorem{theorem}{\bf{Theorem}}[section]
\newtheorem{lemma}{\bf{Lemma}}[section]
\newtheorem{proposition}[theorem]{\bf{Proposition}}
\newtheorem{definition}[theorem]{\bf{Definition}}
\newmdtheoremenv{policy}[theorem]{\bf{Policy}}
\newcommand{\DEL}[1]{}
\newcommand{\squishlist}{
\begin{list}{$\bullet$}
  { \setlength{\itemsep}{0pt}
     \setlength{\parsep}{0pt}
     \setlength{\topsep}{0pt}
     \setlength{\partopsep}{0pt}
     \setlength{\leftmargin}{0em}
     \setlength{\labelwidth}{0em}
     \setlength{\labelsep}{0.2em} } }
\renewcommand\footnotetextcopyrightpermission[1]{} %
\begin{document}
\title{User Customizable and Robust Geo-Indistinguishability for Location Privacy}

\author{Primal Pappachan}
\affiliation{%
  \institution{Penn State University, USA}
    \country{}
}
\email{primal@psu.edu}

\author{Chenxi Qiu}
\affiliation{%
  \institution{University of North Texas, USA}
      \country{}
}
\email{chenxi.qiu@unt.edu}

\author{Anna Squicciarini}%
\affiliation{%
  \institution{Penn State University, USA}
      \country{}
}
\email{acs20@psu.edu}

\author{Vishnu Sharma Hunsur Manjunath}
\affiliation{%
  \institution{Penn State University, USA}
  \country{}
}
\email{vxh5104@psu.edu}

\newcommand{\pp}[1]{{\color{brown} (Primal: #1)}}
\newcommand{\acs}[1]{{\color{red} (Anna: #1)}}
\newcommand{\cq}[1]{{\color{green} (Chenxi: #1)}}

\newcommand{\tba}[1]{{\color{red} (XXX)}}

\newcommand{\systemName[1]}{{CORGI}}

\newcommand{\vCount}[1]{|{#1}|}

\newcommand{\vLocation}[2]{\textit{s}^{#1}_{#2}}

\newcommand{\vMatrix}[2]{\ensuremath{\mathbf{Z}}^{#1}_{#2}}
\newcommand{\vCell}[2]{\textit{z}^{#1}_{#2}}
\newcommand{\vPruneSet}[1]{\ensuremath{\mathcal{S}_{#1}}}
\newcommand{\prunesize}{\ensuremath{\delta}}

\newcommand{\vTree}[1]{\ensuremath{\mathcal{T}}_{#1}}
\newcommand{\vNodeSet}[1]{\ensuremath{\mathcal{V}}_{#1}}
\newcommand{\vNode}[2]{\textit{v}^{#1}_{#2}}
\newcommand{\vLeafNodeSet}[1]{\ensuremath{\Lambda_{#1}}}
\newcommand{\vLeafNode}[2]{\ensuremath{\lambda}^{#1}_{#2}}
\newcommand{\ordering}{\ensuremath{\prec}}
\newcommand{\children}[1]{\ensuremath{\mathcal{N}\left(#1\right)}}
\newcommand{\vLevel}[2]{\ensuremath{\lambda}^{#1}_{#2}}
\newcommand{\vHeight}[1]{\ensuremath{\mathcal{H}}_{#1}}

\newcommand{\vPrior}[1]{\textit{p}_{#1}}
\newcommand{\vDistance}[2]{\textit{d}_{#1, #2}}

\newcommand{\vPolicySet}[1]{\ensuremath{\mathcal{P}}_{#1}}

\begin{abstract}

Location obfuscation functions generated by existing systems for ensuring location privacy  are monolithic and do not allow users to customize their obfuscation range.
This can lead to the user being mapped in  undesirable locations (e.g., shady neighborhoods)  to the location-requesting services.
Modifying the obfuscation function generated by a centralized server on the user side can result in poor privacy as the original function is not robust against such updates.
Users themselves might find it challenging to  understand the parameters involved in obfuscation mechanisms (e.g., obfuscation range and granularity of location representation) and therefore struggle to set realistic trade-offs between privacy, utility, and customization.
In this paper, we propose a new framework called, \systemName{}, i.e., \underline{C}ust\underline{O}mizable \underline{R}obust \underline{G}eo \underline{I}ndistinguishability, which generates location obfuscation functions that are \emph{robust} against user customization while providing strong privacy guarantees based on the  Geo-Indistinguishability paradigm. 
\systemName{} utilizes a tree representation of a given region to assist users in specifying their privacy and customization requirements.
The server side of \systemName{} takes these requirements as inputs and generates an obfuscation function that satisfies Geo-Indistinguishability requirements and is robust against customization on the user side.
The obfuscation function is returned to the user who can then choose to update the obfuscation function (e.g., obfuscation range, granularity of location representation). 
The experimental results on a real-world dataset demonstrate that \systemName{} can efficiently generate obfuscation matrices that are more robust to the customization by users (e.g., removing 14.28\% of locations only causes 3.07\% Geo-Indistinguishably constraint violations in the matrix generated by \systemName{} compared to 18.58\% Geo-Indistinguishably constraint violations by non-robust approaches). 
\end{abstract}

\maketitle

\pagestyle{plain} %

\section{Introduction}
To date, many location obfuscation mechanisms have been successfully proposed \cite{kim2021survey}. These approaches, often placed in the context of service provisioning,  transform users' actual locations into obfuscated locations to protect their privacy while ensuring the quality of service.
\emph{Geo-Indistinguishability (Geo-Ind)} is one of the most popular privacy criteria used in location obfuscation mechanisms \cite{andres2013geo}. It extends the well known \emph{Differential Privacy (DP)} \cite{dwork2014algorithmic} paradigm to protect location privacy in a rigorous fashion. 
To satisfy Geo-Ind, if two locations are geographically close, their reported obfuscated locations will have similar probability distributions. In other words, it is hard for an adversary to distinguish a true location among nearby ones, given its obfuscation location. 

Despite their potential, there are several limitations with DP-based approaches, such as Geo-Ind, when they are applied to location privacy scenarios\cite{oyaGeoIndLooking}.  
Among these limitations is that the obfuscation functions based on \textit{Geo-Ind} \cite{Chatzikokolakis-PoPETs2015, Xiao-SIGSAC2015} tend to be monolithic as it provides the same obfuscation range and the granularity of location sharing for all users.
The obfuscation range is a set of locations from which an obfuscated location is chosen, and granularity of the location determines the size/semantics of the location being shared (e.g., lat-long pairs, block, county). 
Users may have different privacy needs and utility requirements depending on the context and application scenario. Prior work~\cite{kifer2014pufferfish, he2014blowfish} has looked at customizing the obfuscation range to provide users' customizability based on their own privacy/utility needs. 
However, they focused on statistical releases of data and not point queries that are used for sharing location data.
\cite{cao2020pglp} extended this and applied it to location privacy, where they represented the possible locations of a user and their indistinguishability requirements using nodes and edges in a policy graph.
Their goal is to ensure \textit{Geo-Ind} for any two connected nodes in the graph, and to achieve this, they apply DP-based noise to latitude and longitude independently. 
However, their approach is best suited when locations can be neatly categorized, i.e., indistinguishability among multiple locations in the same category (e.g., restaurants). Also, it does not allow specific customization of an obfuscation function, i.e., remove my home and office from the obfuscation range.

There are several challenges to be addressed in developing such a framework that allows users to customize location obfuscation mechanisms generated by an untrusted server. The generation is done at the server as it is an expensive computation problem that user devices cannot perform.
The first challenge is of specifying the customization parameters. This involves enabling users to easily denote their preferred granularity of location sharing and their preferences for obfuscation range.
Note that the user preferences contain private information and hence could not be directly shared with the server that generates the obfuscation function.
After customization, some of the locations might be removed from the obfuscation range, and with the remaining locations, the obfuscation function has to satisfy strong privacy guarantees.
The second challenge, therefore, pertains to generating an obfuscation function on the server side that is \emph{robust} against any customization on the user side.
The third challenge is doing these operations efficiently, as generating such a customizable obfuscation function is an expensive optimization problem with many constraints.
Efficiency is also a challenge when the user updates their granularity of sharing, and a new obfuscation function has to be generated.

To address the above challenges, we propose \emph{\systemName{}} (\underline{C}ust\underline{O}mizable \underline{R}obust \underline{G}eo \underline{I}ndistinguishability), a framework for generating location obfuscation with strong privacy guarantees (based on \textit{Geo-Ind}) that effectively allows users to balance the trade-off between privacy, utility, and customization.
\systemName{} utilizes an untrusted server for performing the computationally heavy task of generating the obfuscation function while ensuring the privacy of the user.
In order to do so, \systemName{} uses a tree structure which is a semantic representation of a given region that assists users in specifying their customization preferences. These preferences are used to select the obfuscation range and granularity of location sharing and are only selectively shared with the server so as to protect the privacy of the user e.g., only the number of locations to be removed from the obfuscation range and not the exact locations. 
The \systemName{} generates a \emph{robust} obfuscation function on the server side which satisfies the \textit{Geo-Ind} requirements after user customization i.e., removal of locations that do not satisfy the Boolean predicates.
In order to generate this \emph{robust} function efficiently, \systemName{} minimizes the number of constraints by using a \emph{graph approximation}. 
The experimental results on a real dataset show that the robust obfuscation function generated by \systemName{} is customizable with only minimal loss in utility compared to the traditional approaches which are not robust against customization.

The main contributions of this work are  as follows:

\begin{itemize}
    \item  We propose a tree-based approach to assist users in the specification of customization preferences. This improves the utility of location reporting as the number of locations in the obfuscation function is lower than traditional non-hierarchical  approaches \cite{qiu2020location}.
    \item We present a customization preferences model which are expressed in the form of Boolean Predicates and are selectively shared with server for the purpose of generating obfuscation function. Our customization model is more expressive than the prior work for customization based on policy graphs~\cite{cao2020pglp}. 
    \item We develop a novel method for generating obfuscation functions, that is \emph{robust} against user customization. 
    \item We implement graph approximation to reduce the number of constraints and thus make optimization problem for obfuscation function generation efficient.
    \item We design a framework with interactions between an untrusted server which performs computationally heavy tasks and a user device which performs tasks involving real location data.
    \item We evaluate \systemName{} on a real dataset (Gowalla - social network based on user check-ins) to show the effectiveness of the framework w.r.t privacy, utility, and customization.
\end{itemize}

The rest of the paper is organized as follows. We introduce the \systemName{} framework and describe the key concepts used in our work in Section~\ref{sect:background}. In Section~\ref{sect:models}, we present the tree-based representation used in this work along with the policy model. In Section~\ref{sect:generating}, we describe in detail the generation of the customized and robust obfuscation function for each user. We present in Section~\ref{sect:architecture}, architecture of our framework and detail the control flow on the user and server side. In Section~\ref{sect:experiments}, we evaluate our approach on a real dataset and compare it against a baseline. In Section~\ref{sect:related_work} we go over the related work and we conclude the work by summarizing our contributions, and possible future extensions in Section~\ref{sect:conclusions}.

\section{Preliminaries} 
\label{sect:background}
In this section, we introduce the  \systemName{} %
framework (Section \ref{subsec:framework}) and the preliminaries (Section \ref{subsec:preliminaries}) of our geo-obfuscation approach. 

\subsection{Background}
\label{subsec:preliminaries}
In this section, we formalize  key concepts and notions for our proposed framework, introduced above. %

\vspace{-0.00in}
\begin{table}[h]
\caption{Main notations and their descriptions}
\vspace{-0.10in}
\label{Tb:Notationmodel}
\centering
\small 
\begin{tabular}{l l}
\hline
\hline
Symbol                  & Description \\
\hline
$\vNode{}{i}$                   & Location $i$ or node $i$ \\
$\mathcal{V}$           & Set of locations or nodes  \\
$\vPrior{\vNode{}{i}}$      & Prior probability of location i \\
$\vDistance{i}{j}$   & Distance between locations/nodes $\vNode{}{i}$ and $\vNode{}{j}$ \\ 
$\vNodeSet{}^k$         & The set of nodes with height $k$ in the location tree \\
$\vNodeSet{}^0$         & The set of leaf nodes in the location tree \\
$\vTree{}^i$              & A location tree with $\vNode{}{i}$ as root node \\
$\children{\vNode{}{i}}$ & Children nodes of location $i$ \\ 
$\mathbf{Z}^{K}$      & Obfuscation Matrix at level K \\
$\vCell{}{i,j}$     & Entry in the matrix at row i and column j \\
$\prunesize$          & Number of location nodes to be pruned \\
$\vPruneSet{}$         & Set of location nodes to be pruned \\
\hline
\end{tabular}
\normalsize
\end{table}
\vspace{-0.00in}

\begin{figure*}
    \centering
    \includegraphics[width=0.60 \linewidth]{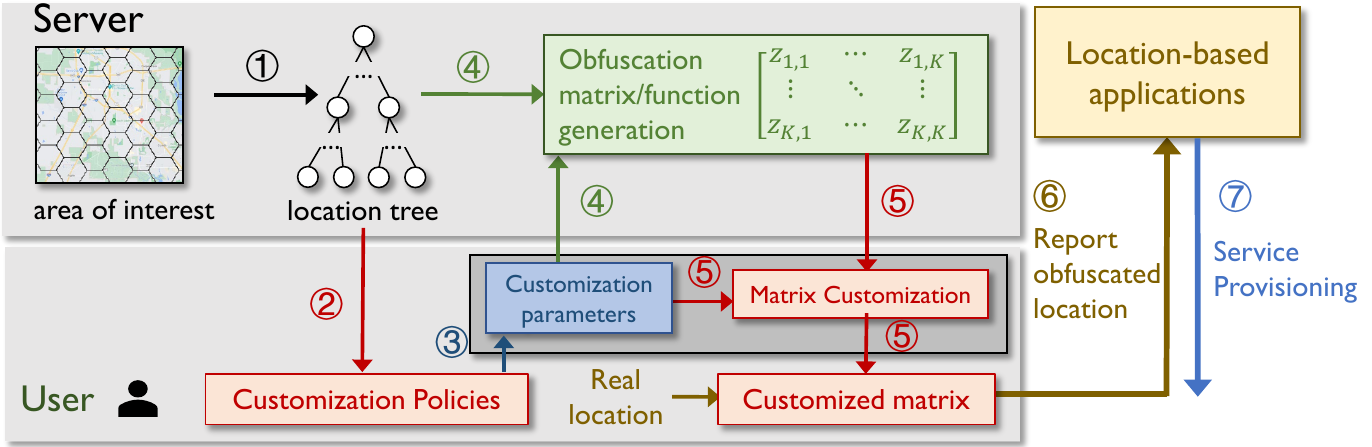}
    \caption{Overview of CORGI framework.}
    \label{fig:overview}
    \vspace{-0.1in}
\end{figure*}

\noindent \textbf{Obfuscation matrix}. Generally, when considering the obfuscation range as a finite discrete location set $\mathcal{V} = \left\{v_1, ..., v_K\right\}$, an obfuscation strategy can be represented as a stochastic matrix $\mathbf{Z} = \left\{z_{i,j}\right\}_{K \times K}$ \cite{qiu2020location}. Here, each $z_{i,j}$ represents the probability of selecting $v_j \in \mathcal{V}$ as the obfuscated location given the real location $v_i \in \mathcal{V}$. For  each real location $v_i$ (corresponding to each row $i$ of $\mathbf{Z}$), the \emph{probability unit measure} needs to be satisfied: 
\begin{equation}
\label{eq:probunitmeasure}
\textstyle    \sum_{j=1}^K z_{i, j} = 1, \forall i = 1, ..., K,
\end{equation}
i.e., the sum probability of its obfuscated locations is equal to 1. In this paper, we consider the location set $\mathcal{V}$ at different granularity levels, and any real location can be only obfuscated to the locations at the same granularity level (details are introduced in Section \ref{subsec:locationtree}).

\vspace{0.05in}
\noindent \textbf{Privacy Criteria}. 
From the attacker's perspective, the user's actual and reported locations can be described as two random variables $X$ and $Y$, respectively. We apply \emph{Geo-Indistinguishability (Geo-Ind)} \cite{andres2013geo} as the privacy criterion for location privacy guarantees:  
\begin{definition}
($\epsilon$-Geo-Ind) Given the obfuscation matrix $\mathbf{Z}$ that covers a set of locations $\mathcal{V}$ at the same granularity level, $\mathbf{Z}$ is called $\epsilon$-Geo-Ind if only if for each pair of real locations $v_i, v_j \in \mathcal{V}$ and any obfuscated $v_l \in \mathcal{V}$
    \begin{equation}
    \label{equ:Geo-Ind}
        \frac{\mathbb{P}\mathrm{r}\left(X= v_i \left|Y = v_l\right.\right)}{\mathbb{P}\mathrm{r}\left(X = v_j \left| Y= v_l\right.\right)} \leq e^{\epsilon d_{i,j}}\frac{p_{v_i}}{p_{v_j}}, 
    \end{equation}
where $p_{v_i}$ and $p_{v_j}$ denote the prior distributions of $v_i$ and $v_j$, respectively, $\epsilon > 0$ is predetermined constant called privacy budget, and $d_{i,j}$ denotes the distance between $v_i$ and $v_j$. %

\end{definition}

Equ. (\ref{equ:Geo-Ind}) indicates that  the posterior of the user's location estimated from its obfuscated location is close to the user's prior location distribution and how close they are depends on the parameter $\epsilon$. In  other words, an external  attacker cannot obtain sufficient additional information from a user's obfuscated location.  

The utility of our approach is measured based on the estimation error in travelling distance due to using obfuscated location in service provisioning.
Given that user's real location is $\vNode{}{i}$, the obfuscated location generated is $\vNode{}{l}$, the target location is $\vNode{}{n}$, the utility is given by
\begin{equation}
\textstyle U(\vNode{}{i}, \vNode{}{l}, \vNode{}{n}) = \mid \vDistance{\vNode{}{i}}{\vNode{}{n}} - \vDistance{\vNode{}{l}}{\vNode{}{n}} \mid. 
\end{equation}
where $\vDistance{\vNode{}{i}}{\vNode{}{n}}$ is implemented using haversine formula. If there are multiple target locations denoted by ${\vNode{}{1}, \ldots, \vNode{}{N}}$, the overall utility is computed as
$\frac{1}{N}\sum_{n=1}^{N} U(\vNode{}{i}, \vNode{}{j}, \vNode{}{n})$.

\subsection{Framework}
\label{subsec:framework}

Our problem setting is that of \emph{Location Based Services (LBS)} where users share their privatized locations with a server in order to receive service provisioning (e.g. Uber, Lyft, Yelp, Citizen Science). There are three main actors in our setting: \emph{users}, \emph{third party providers}, and a \emph{server}. 
\emph{Users} wish to share their locations in a privacy-preserving manner with applications. 
They specify policies in order to state their customization preferences and have a privacy module/middleware running on their mobile device or on a trusted edge computer to assist with location hiding. 
\emph{Third party providers} use the privatized locations shared by the user for providing services to the user.
Finally, we have the \emph{server} which runs on the cloud with whom non-sensitive portions of the user preferences are shared and it takes care of computationally heavy operations.
Users do not trust neither the third party providers nor the server with their sensitive location information or preferences.
Figure~\ref{fig:overview} introduces the flow of \systemName{} and interactions among these three actors: 
\newline \textcircled{1} The server generates a spatial index/location tree for an area of interest that contains the real location of the user (Section~\ref{subsec:locationtree}).  
\newline \textcircled{2}\textcircled{3} The location tree is shared with the users to allow them to specify their preferences (Section~\ref{subsec:policymodel}).
\newline \textcircled{4} The server obtains the customization parameters relevant for determining the privacy budget and generating a robust obfuscation function which guarantees Geo-Indistinguishability \cite{andres2013geo}, and thus provide strong location privacy guarantees. Obfuscated function is represented by a set of probability distributions in an \textit{obfuscation matrix} (Section~\ref{subsect:feasibility}). 
\newline \textcircled{5} Users receive the obfuscation function/matrix and customize it based on their needs (Section~\ref{subsect:pruning})\footnote{Matrix customization operations can be done on a trusted edge server if user device lacks the computational capability}. 
\newline \textcircled{6} \textcircled{7} This customized obfuscation function is utilized to determine the user's obfuscated location,  to be  shared with  third party location-based applications for the purpose of service provisioning.

\section{Models}
\label{sect:models}

In this section, we introduce the models, including the location tree model (Section \ref{subsec:locationtree}), i.e., how we organize locations at different granularity levels in a tree structure, and the user customization policies (Section \ref{subsec:policymodel}), i.e., what attributes are considered in the customization.
\vspace{-0.00in}
\subsection{Location Tree Model}
\label{subsec:locationtree}

We build a hierarchical index over a given spatial region for location representation. 
We  design a tree-like structure, called \emph{location tree}, where each level of the tree represents a particular granularity of location data, and lower levels of the tree increase  granularity. 
This representation of locations is intuitive and makes it easier for users to specify the granularity of location sharing they are comfortable with.

In general, a tree can be represented by $\mathcal{T} = \left(\vNodeSet{}, \ordering \right)$, where $\vNodeSet{}$ denotes the node set and $\ordering$ describes the ordered relationship between nodes, i.e., $\forall \vNode{}{i}, \vNode{}{j}  \in \vNodeSet{}$, $\vNode{}{j}\ordering \vNode{}{i}$ means that $\vNode{}{j}$ is a child of $\vNode{}{i}$. $\forall \vNode{}{i} \in \vNodeSet{}$, we let $\children{\vNode{}{i}}$ denote the set of $\vNode{}{i}$'s children, i.e., $\children{\vNode{}{i}} = \left\{\vNode{}{j} \in \vNodeSet{}\left|\vNode{}{j} \ordering \vNode{}{i}\right.\right\}$. Here, we slightly abuse notation by letting $v_i$ denote both location $i$ and its corresponding node in the location tree. Given these notations, we formally define a location tree  as follows: 
\begin{definition}
\label{def:locationtree}
(\textbf{Location Tree}) A location tree $\mathcal{T} = \left(\vNodeSet{}, \ordering\right)$ is a rooted tree, where  
\begin{itemize}
    \item the root node $\vNode{}{\mathrm{r}} \in \vNodeSet{}$ represents the whole area
    \item the tree is balanced and  leaf nodes are $\left\{\vNode{}{1}, \ldots, \vNode{}{K}\right\}$; 
    \item for each non-leaf node $\vNode{}{i} \in \vNodeSet{}$, its children $\vNode{}{j} \in \children{\vNode{}{i}}$ represent a \emph{partition} of $\vNode{}{i}$, i.e.,  locations in $\children{\vNode{}{i}}$ are disjoint and their union is $\vNode{}{i}$. 
\end{itemize}
\end{definition}

We partition the node set $\vNodeSet{}$ in the location tree into $H+1$ levels: $\vNodeSet{}^{0}$, $\ldots$, $\vNodeSet{}^{H}$, where $H$ is the height of the tree (i.e., the number of hops from the node to the deepest leaf). $\vNodeSet{}^{0}$ represents the set of \emph{leaf nodes}. 
We define \emph{obfuscation in a location Tree} $\left(\vNodeSet{}, \ordering\right)$ as a function that maps a given real location $\vNode{}{i} \in \vNodeSet{}^n$  to another location $\vNode{}{k} \in \vNodeSet{}^n$ and both nodes are at the same level $n$.

We generate this location tree using  Uber's H3\footnote{https://eng.uber.com/h3/} hexagonal hierarchical spatial index which takes as input the longitude, latitude, and resolution (between 0 and 15, with 0 being coarsest and 15 being finest), and outputs a hexagonal grid index for the region (as illustrated in Figure~\ref{fig:treegeneration}). 
H3 divides the target region into contiguous hexagonal cells of the same size using the given resolution level.
For each of the cells generated by H3, it keeps the distance between the center point of a hexagon and its neighboring cells consistent, making it a better candidate for represeting spatial relationships than grid based systems such as Geohash~\footnote{http://geohash.org/site/tips.html}.
Figure~\ref{fig:treegeneration} illustrates the location tree generated for Times Square, New York. Blue nodes at highest granularity represent the leaf nodes. Red and green nodes at lower granularity represent the intermediate nodes. The root node encompasses the entire region. Nodes in each level do not overlap with each other.
Our approach for location representation is inspired by previous works on spatial indexing such as R-Tree proposed by Beckmann et. al \cite{beckmann1990r}.
In  Beckmann's  approach, however,   location nodes can overlap and are not disjoint partitions.
In our cases, if any two location nodes overlap, it is hard to  assess whether these two locations satisfy $\epsilon$\textit{-Geo-Ind} (Equation~(\ref{equ:Geo-Ind})) or not as a user can be in both of these nodes simultaneously. 

\begin{figure}
    \centering
    \includegraphics[width=1.00 \linewidth]{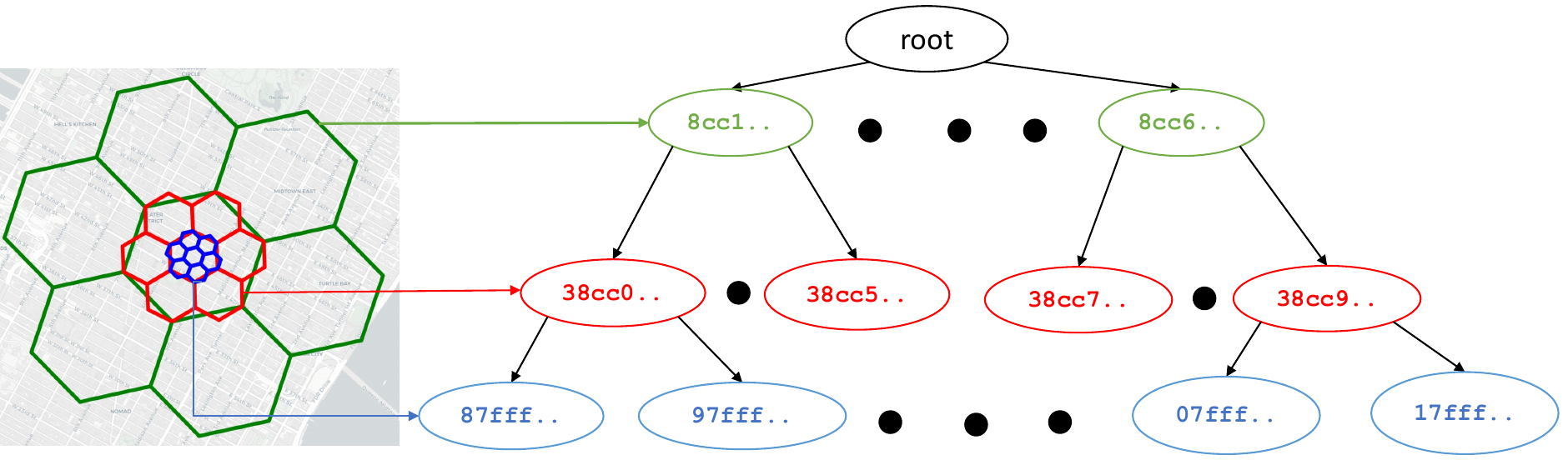}
    \caption{Location Tree  of 3 levels generated using H3. }
    \label{fig:treegeneration}
\end{figure}

\subsection{User Customization Policies}
\label{subsec:policymodel}

Users express customization requirements by way of policies. These policies help determine the properties of the final obfuscation function that is generated. 
A policy captures users' customization requirements as follows: \newline \centerline{$<Privacy\_l$, $\; Precision\_l$, \; $User\_Preferences>$}

\vspace{0.05in}
\noindent \textbf{Privacy level} or {\em $Privacy\_l$} is a user-set parameter that determines the obfuscation range i.e., set of locations/nodes from which users' obfuscated location is selected. 
Given a policy $\vPolicySet{}$ with privacy\_l = $n$, a \emph{privacy forest} is the set of all sub-trees with nodes at level $n$ as their root. 
Thus, the privacy forest contains all the possible locations that can be reported as obfuscated location.
As Figure~\ref{fig:location_tree} shows, if a user selects privacy level $n$,  we first determine the nodes at height $n$ $\vNodeSet{}^n$ which forms the privacy forest.
Accordingly, a higher privacy level implies a wider range of obfuscated locations to select for users.
In Figure~\ref{fig:location_tree}, the red and blue colored subtrees indicate two different user policies both of which specify their privacy level as 2 but with different real locations. 
For a particulart user, if $\vNode{}{}$ is the ancestor of user's real location at height $n$, the sub-tree with $\vNode{}{}$ as the root node includes all the locations that the user could report.
The server can use \emph{privacy level} to limit the number of locations in the obfuscation matrix for this user, and accordingly, reduce the overhead of generating it as well as improve the utility of location reporting compared to traditional approaches~\cite{qiu2020location}. The \textit{privacy level} also provides the flexibility for the user to specify the range of locations they are comfortable sharing. \\
\noindent \textbf{Precision level} ({\em Precision\_l}) specifies the exact granularity at which the user reports their locations (e.g., neighborhood or block). 
For example, if a user requires the precision level to be $1$, then his reported location/node is restricted to the set of nodes in level i.e., $\vNodeSet{}^1$.
Thus Precison\_l gives users the flexibility to reduce the granularity at which location is shared depending on their needs.
As the privacy level is the maximum possible granularity for location sharing, precision level is always lower than the privacy level.\\
\noindent \textbf{User Preferences} specify  
users' preferred options for location selection and further narrows down the obfuscation range and therefore reduces the number of locations/nodes in the matrix.
These may be  expressed in a variety of ways, depending on the application at hand and the users' requests (e.g.black lists of locations, dynamic checks etc). An intuitive approach is to encode preferences as Boolean predicates in the form $<$ \textit{var, \; op, \; val}$>$ where \textit{var} denotes commonly used preferences for locations such as home, office, traffic, weather, driving\_distance etc; \textit{op} is one among $\{=, \neq, <, >, \geq, \leq \}$ depending upon the variable; and \textit{val} is assigned from the domain of the \textit{var}. \\
An example of a policy modelled using these 3 attributes is as follows: 
$<$\textit{privacy\_l = 3}, \textit{precision\_l = 0}, \textit{user\_preferences = [popular = ``True'', distance $\leq$ 5 miles}]
This customization policy states that user would prefer to have the \emph{privacy forest} with nodes from level 3 (privacy\_l = 3) and the nodes in this privacy forest represents their obfuscation range. 
From these set of possible locations, any of them which are not popular i.e., locations where people usually gather,
and has a distance higher than 5 miles from their real location should not be considered for reporting (user\_preferences).
Finally, when generating their obfuscated location, they would like it to be at the granularity of level 0 (precision\_l = 0) which are the leaf nodes.

\begin{figure}
    \centering
    \includegraphics[width=1.00 \linewidth]{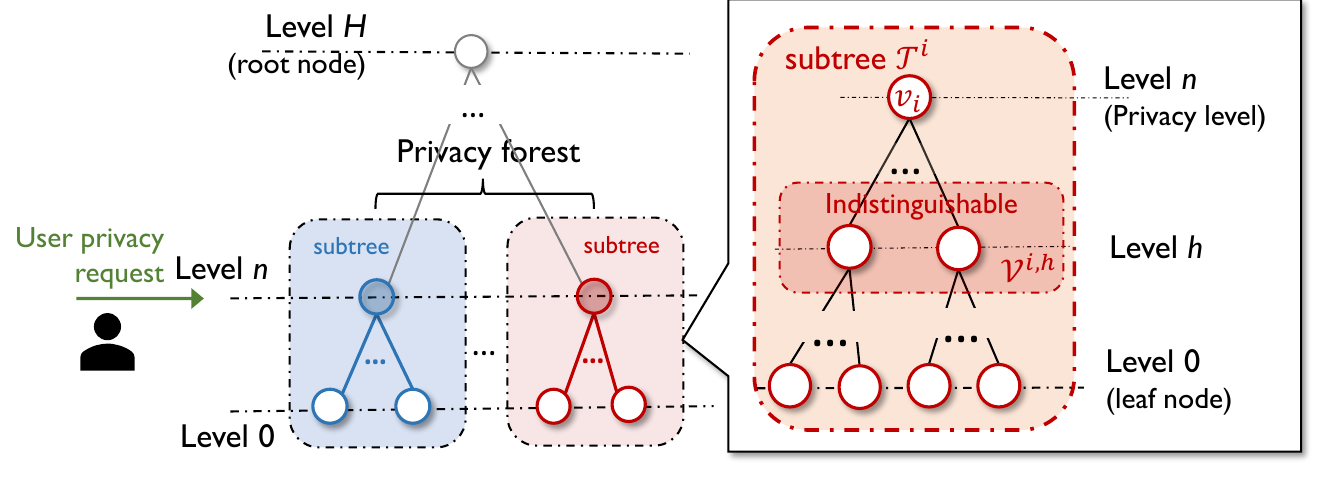}
    \vspace{-0.10in}
    \caption{Tree-based geo-obfuscation.}
    \label{fig:location_tree}
    \vspace{-0.10in}
\end{figure}

\vspace{-0.0in}
\section{Generating Robust Obfuscation Matrix}
\label{sect:generating}

We describe how to generate a robust obfuscation matrix, that preserve strong privacy guarantees while meeting users' customization policies, using the location tree.  
This is non-trivial, as introducing additional constraints based on policies affects the ability to obfuscate locations within certain regions and limit the range of possible obfuscated locations.
\vspace{-0.00in}
\subsection{Feasibility Conditions for Geo-I}
\label{subsect:feasibility}
\vspace{-0.00in}
Given a user's requested privacy level $n$, the nodes at level $n$ ($\vNode{}{i} \in \vNodeSet{}^n$) and their children nodes ($\children{\vNode{}{i}}$) represent the possible set of obfuscated locations for a user.
As the server does not know the user's real location or the subtree that contains user's real location, it has to generate obfuscation matrix for each node $\vNode{}{i}$ based on the leaf nodes in $\children{\vNode{}{i}}$.
The server then returns all the generated obfuscation matrices to the user, and the user selects the obfuscation matrix according to their real location. 
Suppose users want to report a location with lower granularity than the leaf nodes for which the matrix is generated, they can do by applying precision reduction (discussed in Section \ref{sect:precisionreduction}) to generate the obfuscation matrix at the desired precision level.

\DEL{
\begin{proposition}
\label{prop:convexity}
(Convexity) The feasible region of obfuscation matrices is convex. 
\end{proposition}
\begin{proof}
The detailed proof can be found in Appendix. 
\end{proof}}

\begin{figure}
    \centering
    \includegraphics[width=0.95\linewidth]{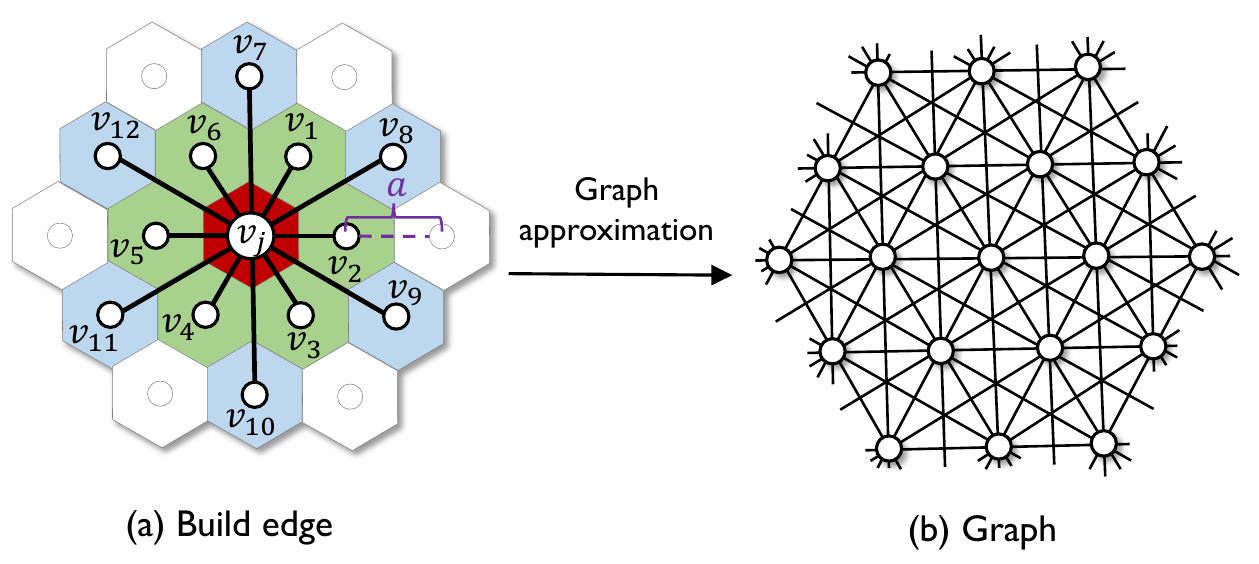}
    \vspace{-0.00in}
    \caption{Graph approximation. \\
    {\small *(a) Build 12 edges connected to $v_1$: $e_{j, 1}, ..., e_{j, 12}$. The distance from $v_1, ..., v_6$ to $v_j$ is $a$; and the distance from $v_7, ..., v_{12}$ to $v_j$ is $\sqrt{3}a$.}}
    \label{fig:graphapproximation}
    \vspace{-0.0in}
\end{figure}

Next, we introduce how to generate {\em feasible}  obfuscation matrices  of each subtree rooted at level $n$. Suppose that $v_i$ is a node at level $n$, then we use $\mathcal{T}^{i}$ to denote the subtree rooted at $v_i$ and let $\mathcal{V}^{i,0}$ denote the set of leaf nodes in $\mathcal{T}^{i}$, as shown in Fig. \ref{fig:location_tree}.

We use $\mathbf{Z}^{0} = \left\{z_{k,l}\right\}_{\left|\mathcal{V}^{i,0}\right|\times \left|\mathcal{V}^{i,0}\right|}$ to represent the obfuscation matrix of $\mathcal{T}^{i}$ at precision level $0$ (the highest precision level). We call $\mathbf{Z}^{0}$ is \emph{feasible} if only if both $\epsilon$-Geo-Ind (general case is defined in Equ. (\ref{equ:Geo-Ind})) 
\begin{equation}
\label{equ:Geo-Ind1}
    \frac{\mathbb{P}\mathrm{r}\left(X= v_j \left|Y = v_l\right.\right)}{\mathbb{P}\mathrm{r}\left(X = v_k \left| Y= v_l\right.\right)} \leq e^{\epsilon d_{i,j}}\frac{p_{v_j}}{p_{v_k}}, \forall v_j, v_k, v_l \in \mathcal{V}^{i,0}
\end{equation}
and probability unit measure 
\begin{equation}
\label{equ:probabilityunit1}
\textstyle    \sum_{v_l \in \mathcal{V}^{i,0}} z_{k, l} = 1,~\forall k \in \mathcal{V}^{i,0},
\end{equation}
are satisfied. We let $\mathcal{Q} = {v_1, ..., v_M}$ denote a set of places of interests which in our problem setting are locations where service is requested for for e.g., passenger pickup. 
Given the target location $v_q \in \mathcal{Q}$, the actual location $v_k$ of a user, the obfuscated location $v_l$, the expected estimation error of moving distance caused by obfuscation matrix $\mathbf{Z}^{0}$ is obtained by 
\vspace{-0.0in}
\begin{equation}
\label{eq:matrix_gen_1}
\textstyle \Delta_q\left(\mathbf{Z}^{0}\right) = \sum_{v_k\in \mathcal{V}^{i,0}} \mathbb{P}\mathrm{r}\left(X = v_{k}\right) \sum_{v_l\in \mathcal{V}^{i,0}} z_{k,l} \textstyle U(\vNode{}{k},\vNode{}{l},\vNode{}{q}).
\end{equation}

Given the probability distribution of target locations $\mathbb{P}\mathrm{r}\left(Q=v_q\right)$, then we can define the quality loss as the expected estimation error of moving distance as
\begin{equation}
\label{eq:matrix_gen}
\textstyle \Delta\left(\mathbf{Z}^{0}\right) = \sum_{v_q \in \mathcal{V}^{i,0}} \mathbb{P}\mathrm{r}\left(Q=v_q\right)\Delta_n\left(\mathbf{Z}^{0}\right) %
\end{equation}
$\mathbf{Z}^{0}$ is generated by solving the following linear programming (LP) problem
\begin{eqnarray}
\label{eq:LP}
    \min ~ \Delta\left(\mathbf{Z}^{0}\right) &\mbox{s.t.} ~ \mbox{Equ. (\ref{equ:Geo-Ind1}) (\ref{equ:probabilityunit1}) are satisfied}
\end{eqnarray}%
i.e., minimize the expected estimation error of moving distance using the matrix $\mathbf{Z}^0$ to all the target locations. 
Once $\mathbf{Z}^{0}$ is generated, it will be delivered to the user and they are allowed to customize $\mathbf{Z}^{0}$ based on evaluation of \textit{User\_Preferences} (Section \ref{subsect:pruning}) and selection of the desired granularity level (Section \ref{sect:precisionreduction}). 

While customizing, users select to remove certain number of locations from the obfuscation range and this results in a pruning of the matrix.
Note that, after $\mathbf{Z}^{0}$ is pruned, the new matrix might no longer satisfy the Geo-I constraints in Equ. (\ref{equ:Geo-Ind1}) (the details of matrix pruning will be introduced in Section \ref{subsect:pruning}). Intuitively, to avoid this potential privacy issue,  we need to reserve more privacy budget when formulating the $\epsilon$-Geo-Ind constraints, and generate a more \emph{robust} matrix that allows users to remove up to a certain number of locations in the matrix without violating Geo-I. The details of generating such robust matrix will be given in Section \ref{subsec:robustmatrix}.    

\subsection{Graph approximation to reduce number of Geo-I constraints}
\label{subsec:graphapprox}
According to the definition of Geo-Ind in Equ. (\ref{equ:Geo-Ind1}), for each column (location) of the obfuscation matrix $\mathbf{Z}^{0}$, an $\epsilon$-Geo-Ind constraint is generated for pairwise comparison of all locations, leading to a total of $O(K^3)$ constraints. This generates a very high computation load to derive $\mathbf{Z}^0$. 
To improve the time efficiency of the matrix calculation, we approximate the users' mobility on the 2D plane by a graph, where it is sufficient to enforce $\epsilon$-Geo-Ind for each pair of neighboring nodes (\emph{Theorem \ref{thm:transitivity}}), to enforce the $\epsilon$-Geo-Ind constraints for all pairs of nodes. This reduces the number of constraints in LP from $O(K^3)$ to $O(12 \times K^2) = O(K^2)$.

The method for approximating the hexagonal grid to a graph is illustrated in Figure~\ref{fig:graphapproximation}. We connect each node $v_i$ to not only the 6 immediate neighbors (denoted by $v_1, ..., v_6$) but also the 6 other diagonal neighbors (denoted by $v_7, ..., v_{12}$). We let $a$ denote the distance between the immediate neighbors, computed based on the distance between their center points and therefore the weight of each edge is set to $a$. Then, we can obtain a weighted graph $\mathcal{G}$ as Fig. \ref{fig:graphapproximation}(b) shows. 
The length of the shortest path between any pair of nodes $v_j$ and $v_k$ on the graph, denoted by $d_{\mathcal{G}}(v_j, v_k)$.
Since the graph is undirected, we have $d_{\mathcal{G}}(v_j, v_k) = d_{\mathcal{G}}(v_k, v_j)$, $\forall v_j, v_k \in \mathcal{V}^{i,0}$. 
To ensure that $\epsilon$-Geo-Ind on $\mathcal{G}$ to be a sufficient condition of the original $\epsilon$-Geo-Ind constraint defined on the 2D plane, we need to guarantee that  $d_{\mathcal{G}}(v_j, v_k)$ is no longer than their Euclidean distance $\vDistance{j}{k}$, i.e., $d_{\mathcal{G}}(v_j, v_k) \leq d_{j,k}$ (the reason will be further explained in the proof of the \emph{Theorem \ref{thm:transitivity}}). We first introduce Lemma \ref{lem:graph} as a preparation of of Theorem \ref{thm:transitivity}. 

\begin{lemma}
\label{lem:graph}
$\forall v_j, v_k \in \mathcal{V}^{i, 0}$, $d_{\mathcal{G}}(v_j, v_k) \leq d_{j,k}$. 
\end{lemma}
\begin{proof}
We consider the two locations $v_k$ and $v_j$ on a polar coordinate system, where $v_j$ is located at the origin point. We use $[r_k, \varphi_k]$ to represent $v_k$'s polar coordinate, where $r_k \geq 0$ denotes the radial coordinate and $\varphi_k \in (-\pi, \pi]$ denotes the angular coordinate. As Fig. \ref{fig:graphapproximation_proof} shows, there are 6 different cases according to the value of $\varphi_k$: Case 1: $\varphi_k\in \left(-\frac{\pi}{6}, \frac{\pi}{6}\right]$, Case 2: $\varphi_k\in \left(\frac{\pi}{6}, \frac{\pi}{2}\right]$, Case 3:  $\varphi_k\in \left(\frac{\pi}{2}, \frac{5\pi}{6}\right]$, Case 4: $\varphi_k\in \left(\frac{5\pi}{6}, -\frac{5\pi}{6}\right]$, Case 5: $\varphi_k\in \left(-\frac{5\pi}{6}, -\frac{\pi}{2}\right]$, and Case 6: $\varphi_k\in \left(-\frac{\pi}{2}, -\frac{\pi}{6}\right]$. In what follows, we prove that Lemma \ref{lem:graph} is true in Case 1, where the conclusion can be applied to other 5 cases due to the symmetricity of the 6 cases.  

Case 1 can be further divided to the two cases: Case 1(a), when $\varphi_k \in \left[0, \frac{\pi}{6}\right]$, and Case 1(b), when $\varphi_k \in \left[\frac{11\pi}{6}, 0\right]$. 
\newline In Case 1(a), we can always find a location $v_1$ that is $u$ hops away from $v_j$ in the direction of $\frac{\pi}{6}$ and $w$ hops away from $v_k$ in the direction of $\frac{2\pi}{3}$. Starting from $v_1$ and $v_k$, if we move in the direction of $\frac{2\pi}{3}$, we can find a location $v_2$ and $v_3$ with the radial coordinate equal to 0. 
The number of hops from $v_1$ to $v_j$ is equal to number of hops from $v_2$ to $v_j$ ($u$ hops). The number of hops from $v_k$ to $v_1$ is equal to number of hops from $v_3$ to $v_2$ ($w$ hops). Note that the length of each hop in the graph is $a$. In Case 1(b), we can always find a location $v_1$ that is $u$ hops away from $v_j$ in the direction of $-\frac{\pi}{6}$ and $w$ hops away from $v_k$ in the direction of $-\frac{\pi}{3}$. Similarly, we can find the corresponding $v_2$ that is $u$ hops away from $v_j$ and $v_3$ that is $w$ hops away from $v_2$. 
In both Case 1(a)(b), according to the \emph{Law of Sines}, we obtain that \begin{equation}
    d_{j,k} = \frac{\sin \angle v_j v_3 v_k}{\sin \angle v_j v_k v_3} d_{j,3} \geq d_{j,3} = \left(u+w\right)a
\end{equation}
from which we can then derive that (according to the \emph{triangle inequality} on a graph)
\begin{eqnarray}
    d_{\mathcal{G}}\left(v_j, v_k\right) &\leq&   \underbrace{d_{\mathcal{G}}\left(v_j, v_1\right)}_{u\times a} + \underbrace{d_{\mathcal{G}}\left(v_1, v_k\right)}_{w\times a} %
    \leq d_{j, k} \nonumber
\end{eqnarray}
The proof is completed. 
\end{proof}

\begin{figure}[t]
    \centering
    \includegraphics[width=0.90 \linewidth]{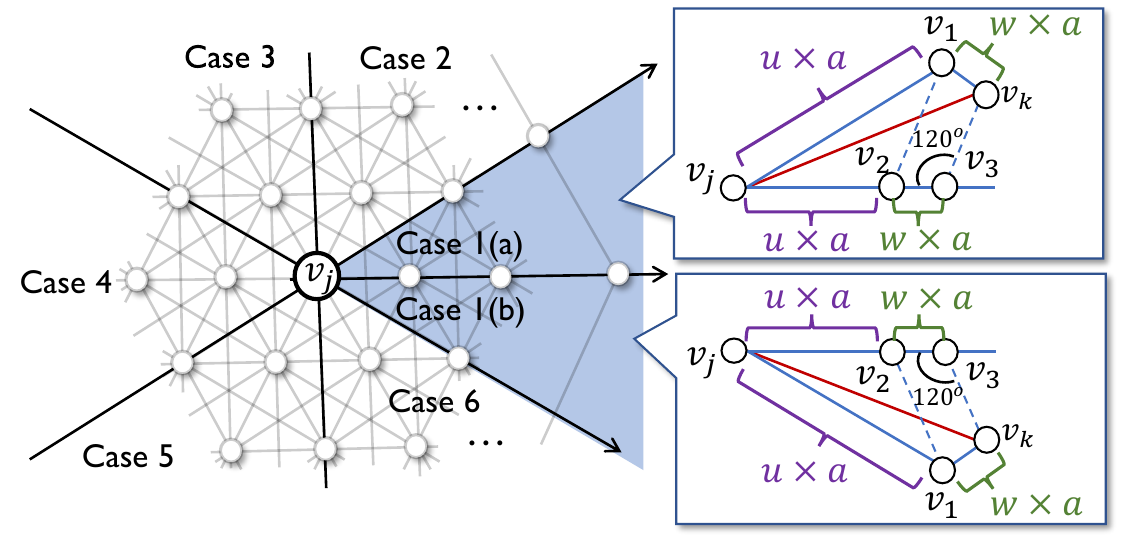}
    \caption{Proof of Lemma \ref{lem:graph}}
    \label{fig:graphapproximation_proof}
    \vspace{-0.10in}
\end{figure}

\begin{theorem}
\label{thm:transitivity}
(\emph{Transitivity of $\epsilon$-Geo-Ind}) To enforce $\epsilon$-Geo-Ind for each pair of locations, it is sufficient to enforce $\epsilon$-Geo-Ind only for each pair of neighboring peers in the graph $\mathcal{G}$. 
\end{theorem}
\begin{proof}
We pick up any pair of locations. Without loss of generality, we denote the two locations by $(v_1, v_M)$ and denote their shortest path by $\mathcal{S}_{(v_1, v_M)} = \left(\left(v_1, v_2\right),  ... , \left(v_{M-1}, v_M\right)\right)$. We then prove that $(v_1, v_M)$ satisfies $\epsilon$-Geo-Ind if all the neighboring peers satisfy Geo-I. 

Since $v_{1}, ..., v_{M}$ are in the shortest path from $v_1$ to $v_M$ sequentially, $\vDistance{1}{M} \geq d_{\mathcal{G}}(1, M) = \sum_{l=1}^{M-1} d_{\mathcal{G}}(v_l, v_{l+1})$ (according to Lemma \ref{lem:graph}).  
\DEL{Then, as $(v_1, v_{m_1})$ and $(v_{m_1}, v_M)$ satisfy $\epsilon$-Geo-Ind, 
\begin{eqnarray}
\label{eq:geoI1m1}
\Rightarrow && z_{1,k}  - e^{{\epsilon d_{1,m_1}}}  z_{m_1,k} \leq 0, ~\forall k\\
\label{eq:geoIm1M}
&& z_{m_1,k}  - e^{{\epsilon d_{m_1,M}}}  z_{M,k} \leq 0, ~\forall k
\end{eqnarray}
from which we can derive that $\forall k$,
\begin{eqnarray}
&& z_{1,k}  - e^{{\epsilon d_{1,M}}}  z_{M,k}  \\ \nonumber 
&=& \underbrace{\left(z_{1,k}  -  e^{{\epsilon d_{1,m_1}}}  z_{m_1,k}\right)}_{\leq 0~\mbox{\footnotesize according to Equ. (\ref{eq:geoI1m1})}} +  \underbrace{\left(e^{{\epsilon d_{1,m_1}}}  z_{m_1,k}  - e^{{\epsilon \left(d_{1, m_1} + d_{1, M}\right)}}z_{M,k}\right)}_{\leq 0~\mbox{\footnotesize according to Equ. (\ref{eq:geoIm1M})}} \\
&\leq& 0. 
\end{eqnarray}
indicating $(v_1, v_{M})$ satisfies Geo-I.} 

Because each neighboring peer  $(v_{m_{l}}, v_{m_{l+1}})$ ($l = 1, ..., M-1$) satisfies $\epsilon$-Geo-Ind, for each obfuscated location $v_k$, 
\begin{eqnarray}
&& z_{1,k}  - e^{{\epsilon d_{1,M}}}  z_{M,k} \leq  z_{1,k}  - e^{{\epsilon \sum_{l=1}^{M-1} d_{\mathcal{G}}(v_l, v_{l+1})}}  z_{M,k}\\ \nonumber 
&=&\sum_{l = 1}^{M-1}\underbrace{\left(z_{l,k} - e^{{\epsilon d_{l, l+1}}}z_{l+1,k}\right)}_{\leq 0~\mbox{\footnotesize since $(v_{l}, v_{l+1})$ satisfy $\epsilon$-Geo-Ind}}e^{{\epsilon \sum_{h=1}^{l-1}d_{h, h+1}}} \leq 0, 
\end{eqnarray}
indicating that $(v_1, v_M)$ satisfy $\epsilon$-Geo-Ind. The proof is completed. 
\end{proof}

Note that enforcing $\epsilon$-Geo-Ind for neighbors in $\mathcal{G}$ provides a sufficient condition for the original $\epsilon$-Geo-Ind constraints (defined in Equ. (\ref{equ:Geo-Ind})), but not a necessary condition, which means it might shrink the feasible region of the original LP defined in Equ. (\ref{eq:LP}), leading to a higher quality loss ($\Delta\left(\mathbf{Z}^{0}\right)$).

\subsection{Customization by Matrix Pruning}
\label{subsect:pruning}

\begin{figure}
    \centering
    \includegraphics[width=0.96 \linewidth]{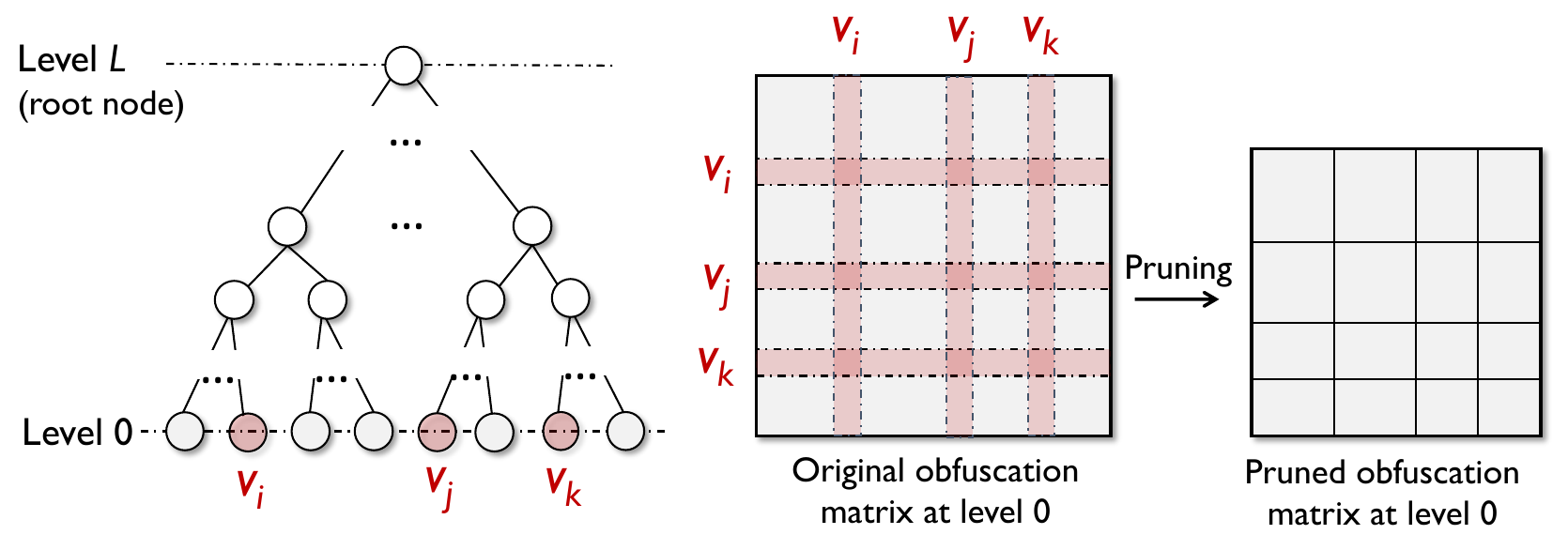}
    \vspace{-0.0in}
    \caption{Matrix pruning (in the figure, $\vNode{}{i}, \vNode{}{j}, \vNode{}{k}$ are the locations to be pruned).}
    \label{fig:pruning}
    \vspace{-0.0in}
\end{figure}

After receiving obfuscation matrices from the server, the user can select the matrix $\mathbf{Z}^{0}$ based on their real location and can customize the matrix by removing the locations that do not satisfy their preferences. simplicity. 
For example, in Figure~\ref{fig:pruning}, the three nodes marked in red at Level 0, \{$\vNode{}{i}, \vNode{}{j}, \vNode{}{k}  $\}, are to be pruned\footnote{Even though pruning can be done at any level of the tree, it makes the most sense to do it for locations at leaf node (at highest granularity) so as to remove only the exact locations and avoid over-pruning.}. The 3 corresponding rows and columns in the matrix $\mathbf{Z}^{0}$ are highlighted and in the next step, they are removed. 

The resulting  matrix $\vMatrix{0}{*}$ is considered  \textit{feasible}, only if it still satisfies the probability unit measure for each row in the matrix as per Equ.~(\ref{eq:probunitmeasure}). ~%
We denote the set of nodes %
(that do not satisfy the user's preferences)
to be removed from the matrix by $\vPruneSet{}$ ($\vPruneSet{} \subseteq \vNodeSet{}^{0}$).
After pruning, the new obfuscation matrix $\vMatrix{0}{*}$ is of dimensions $m \times m$ where $m = \vCount{\vNodeSet{}^{0} - \vPruneSet{}}$. This process called \emph{matrix pruning} is carried out as follows:

\begin{itemize}
\item Remove the rows and the columns of nodes with indices in $\mathcal{S}$ from $\mathbf{Z}^{0}$ to create $\mathbf{Z}^{0}_{*}$.  
\item For each remaining row $i$ in $\vMatrix{0}{*}$, multiply each entry in the matrix $\vCell{}{i,k}$ by $\frac{1}{1 - \sum_{l \in \vPruneSet{}}\vCell{}{i,l}}$, i.e., $\vCell{}{i,k} \leftarrow \frac{\vCell{}{i,k}}{1 - \sum_{l \in \vPruneSet{}}\vCell{}{i,l}}$. 
\end{itemize}

This ensures that the entries in each row still satisfy the probability unit measure, i.e.,  
\begin{eqnarray}
    \sum_{k\in \vNodeSet{}^0\backslash \vPruneSet{}} \vCell{}{i,k} = \frac{\sum_{k\in \vNodeSet{}^0\backslash \vPruneSet{}} \vCell{}{i,k}}{1 - \sum_{l \in \vPruneSet{}}\vCell{}{i,l}} = \frac{\sum_{k\in \vNodeSet{}^0} \vCell{}{i,k}-\sum_{k\in \vPruneSet{}} \vCell{}{i,k}}{1 - \sum_{l \in \vPruneSet{}}\vCell{}{i,l}} = 1. \nonumber 
    \label{eq:prunedUnitProb}
\vspace{-0.0in}
\end{eqnarray}

\subsection{Ensuring Robustness of Customized Matrix}
\label{subsec:robustmatrix}

After  matrix pruning, although the pruned matrix satisfies the probability unit measure, it might not satisfy $\epsilon$-Geo-I since in each column $k$, the entries $z_{i,k}$ ($i = 1, ..., K$) are multiplied by different factors $\frac{1}{1 - \sum_{l \in \vPruneSet{}}\vCell{}{i,l}}$.
We denote the size of the set of nodes to be pruned from the matrix as $\delta$ i.e., $\delta$=$\vCount{\vPruneSet{}}$, and define \emph{$\delta$-prunable} robust matrix as follows:

\begin{definition}
\label{def:deltaPrune}
An obfuscation matrix $\mathbf{Z}$ is called \emph{$\delta$-prunable} if, after removing up to $\delta$ number of nodes from $\mathbf{Z}$ through \emph{matrix pruning}, the new matrix $\vMatrix{}{*}$ still satisfies $\epsilon$-Geo-Ind, i.e., $\forall i, j, k$, 
\begin{equation}
\label{equ:epsilonij}
    \frac{z_{i,k}}{1 - \sum_{l \in \mathcal{S}}z_{i,l}} - e^{\epsilon d_{i,j}}  \frac{z_{j,k}}{1 - \sum_{l \in \mathcal{S}}z_{j,l}} \leq 0, \forall \mathcal{S} \subseteq \mathcal{V}^{i,0} \mbox{ s.t. }\left|\mathcal{S}\right| \leq \delta
\vspace{-0.0in}
\end{equation}
\end{definition}

In order to make an obfuscation matrix $\delta$-prunable, we need to reserve more privacy budget $\epsilon_{i,j}$ (defined in Equ. (\ref{eq:robustGeoI})) for each pair of locations $v_i$ and $v_j$, such that even a certain number of locations are pruned from the matrix, the Geo-I constraints of $v_i$ and $v_j$ are still satisfied.
We now define \emph{reserved privacy budget}, denoted by $\epsilon_{i,j}$, as follows.

\begin{definition}
The reserved privacy budget $\epsilon_{i,j}$ for each pair of locations $\vNode{}{i}$ and $\vNode{}{j}$ where $i, j$ are their indices in the obfuscation matrix is given by, 
\vspace{-0.00in}
\begin{equation}
\label{eq:robustGeoI}
    \epsilon_{i,j} = \frac{1}{d_{i,j}}\ln \left(\max_{\mathcal{S} \subseteq \mathcal{V}^{i,0} \mbox{ s.t. } |\mathcal{S}| \leq \delta} \frac{1 - \sum_{l \in \mathcal{S}}z_{j,l}}{1 - \sum_{l \in \mathcal{S}}z_{i,l}}\right)
\vspace{-0.00in}
\end{equation}
\end{definition}

\begin{proposition}
\label{prop:matrixprune}
A sufficient condition for $\mathbf{Z}$ to be $\delta$-prunable is to satisfy 
\vspace{-0.10in}
\begin{equation}
\label{eq:prunableconstr}
   z_{i,k} - e^{\left({\epsilon } - \epsilon_{i,j}\right)d_{i,j}}z_{j,k}\leq 0, \forall i, j, k.
\vspace{-0.10in}
\end{equation}
\end{proposition}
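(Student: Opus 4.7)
The plan is to start from the hypothesized inequality in Equ.~(\ref{eq:prunableconstr}) and, for an arbitrary pruning set $\mathcal{S}\subseteq \mathcal{V}^{i,0}$ with $|\mathcal{S}|\leq \delta$, divide both sides by $1 - \sum_{l\in\mathcal{S}} z_{i,l}$ (which we may assume strictly positive, otherwise pruning would remove all mass assigned by row $i$ and the matrix would not be meaningfully prunable in the first place). This converts the ``budget-reserved'' inequality into the pruned Geo-I inequality of Equ.~(\ref{equ:epsilonij}), provided we can absorb the factor $e^{-\epsilon_{i,j} d_{i,j}}$ into the switch of denominators from row $i$ to row $j$.

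More explicitly, the key algebraic identity to verify is
\begin{equation*}
    \frac{e^{(\epsilon-\epsilon_{i,j})d_{i,j}}}{1-\sum_{l\in\mathcal{S}}z_{i,l}} \;\leq\; \frac{e^{\epsilon d_{i,j}}}{1-\sum_{l\in\mathcal{S}}z_{j,l}},
\end{equation*}
which, after canceling $e^{\epsilon d_{i,j}}$ and rearranging, reduces to
\begin{equation*}
    e^{\epsilon_{i,j} d_{i,j}} \;\geq\; \frac{1-\sum_{l\in\mathcal{S}}z_{j,l}}{1-\sum_{l\in\mathcal{S}}z_{i,l}}.
\end{equation*}
But this is immediate from the definition of the reserved privacy budget in Equ.~(\ref{eq:robustGeoI}), since $\epsilon_{i,j}$ is precisely the quantity that makes $e^{\epsilon_{i,j}d_{i,j}}$ equal to the maximum of the right-hand side over all admissible $\mathcal{S}$.

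Combining these two steps, I would first multiply both sides of the hypothesis $z_{i,k} \leq e^{(\epsilon-\epsilon_{i,j})d_{i,j}} z_{j,k}$ by $1/(1-\sum_{l\in\mathcal{S}}z_{i,l})$, then apply the displayed inequality above to replace the denominator on the right with $1-\sum_{l\in\mathcal{S}}z_{j,l}$ at the cost of upgrading the exponent from $\epsilon - \epsilon_{i,j}$ to $\epsilon$, and finally recognize the resulting expression as exactly the $\delta$-prunable Geo-Ind condition in Definition~\ref{def:deltaPrune}. Since $\mathcal{S}$, $i$, $j$, $k$ are arbitrary, this suffices.

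\textbf{Main obstacle.} The proof itself is a short two-line manipulation; the only subtle point is justifying that $1 - \sum_{l\in\mathcal{S}} z_{i,l} > 0$ for every admissible $\mathcal{S}$ so that neither division by zero nor an inequality flip occurs. I would handle this by noting that the matrix-pruning step in Equ.~(\ref{eq:prunedUnitProb}) is already implicitly predicated on this positivity (otherwise the rescaling $z_{i,k} \leftarrow z_{i,k}/(1-\sum_{l\in\mathcal{S}}z_{i,l})$ would not define a valid distribution), so this is a consistency assumption already built into the notion of being $\delta$-prunable rather than an additional hypothesis to discharge.
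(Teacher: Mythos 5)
Your proof is correct and is essentially the paper's own argument read in the opposite direction: both hinge on the single inequality $e^{\epsilon_{i,j}d_{i,j}} \geq \frac{1-\sum_{l\in\mathcal{S}}z_{j,l}}{1-\sum_{l\in\mathcal{S}}z_{i,l}}$, which is immediate from the definition of the reserved privacy budget in Equ.~(\ref{eq:robustGeoI}) as a maximum over all admissible $\mathcal{S}$, followed by the same rearrangement passing from Equ.~(\ref{eq:prunableconstr}) to Equ.~(\ref{equ:epsilonij}). Your explicit justification that $1-\sum_{l\in\mathcal{S}}z_{i,l}>0$ (and, by the same consistency argument, the row-$j$ denominator) is a point the paper's proof silently assumes when it factors out $\frac{1}{1-\sum_{l\in\mathcal{S}}z_{j,l}}$, so that remark is a welcome clarification rather than a divergence.
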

\begin{proof}
Given that Equation (\ref{eq:prunableconstr}) is satisfied, then for each column $k$, $\forall i, j, \mathcal{V}'_0 \in \mathcal{V}_0$ with $|\mathcal{V}'_0| \leq \delta$, 
\normalsize

\vspace{-0.00in}
\begin{eqnarray}
&&\frac{z_{i,k}}{1 - \sum_{l \in \mathcal{V}'_0}z_{i,l}} - e^{{\epsilon}d_{i,j}}  \frac{z_{j,k}}{1 - \sum_{l \in \mathcal{V}'_0}z_{j,l}} \nonumber \\
&=& \frac{1}{1 - \sum_{l \in \mathcal{V}'_0}z_{j,l}}\left(\frac{1 - \sum_{l \in \mathcal{V}'_0}z_{j,l}}{1 - \sum_{l \in \mathcal{V}'_0}z_{i,l}}z_{i,k} - e^{{\epsilon }d_{i,j}}z_{j,k}\right) \nonumber \\
&\leq& \frac{1}{1 - \sum_{l \in \mathcal{V}'_0}z_{j,l}}\left(e^{\epsilon_{i,j}d_{i,j}}  z_{i,k} - e^{{\epsilon }d_{i,j}}z_{j,k}\right) \nonumber \\
&=& \frac{e^{\epsilon_{i,j}d_{i,j}}}{1 - \sum_{l \in \mathcal{V}'_0}z_{j,l}}\underbrace{\left(z_{i,k} - e^{\left({\epsilon } - \epsilon_{i,j}\right)d_{i,j}}z_{j,k}\right)}_{\leq 0~\mbox{\footnotesize according to Equ. (\ref{eq:prunableconstr})}}\leq 0. \nonumber
\end{eqnarray}
\end{proof}
\vspace{-0.00in}

Thus, we can state the minimization problem for \emph{robust matrix generation}, $\min ~\Delta\left(\mathbf{Z}^0\right)$, where the objective function (Equ.~(\ref{eq:matrix_gen})) and equality constraints remains the same as earlier (Equ.~(\ref{equ:probabilityunit1})) but the inequality constraints are updated to Equ.~(\ref{eq:prunableconstr}) using reserved privacy budget.

In order to calculate $\epsilon_{i,j}$ in Equ.~(\ref{eq:robustGeoI}), we need to consider all the possible subsets of $\mathcal{S} \subseteq \mathcal{V}^{i,0}$ with the cardinality no larger than $\delta$.
The complexity of computing the reserved privacy budget increases exponentially with $\delta$. 
Therefore, we define an approximation of $\epsilon_{i, j}$, denoted by $\epsilon'_{i,j}$ as follows:
\begin{equation}
\label{eq:approximateepsilon}
    \epsilon'_{i,j} = \frac{1}{d_{i,j}}\ln \left( \frac{1 - \frac{\max_{\mathcal{S} \subseteq \mathcal{V}^{i,0} \mbox{ s.t. } |\mathcal{S}| \leq \delta}\sum_{l \in \mathcal{S}}z_{j,l}}{e^{\epsilon d_{i,j}}}}{1 - \max_{\mathcal{S} \subseteq \mathcal{V}^{i,0} \mbox{ s.t. } |\mathcal{S}| \leq \delta}\sum_{l \in \mathcal{S}}z_{j,l}}\right)
\end{equation}

\begin{proposition}
\label{prop:rpb_approximation}
The matrix generated by replacing $\epsilon_{i,j}$ with $\epsilon'_{i,j}$ in Equ.~(\ref{eq:prunableconstr}) is an upper bound of the solution.
\end{proposition}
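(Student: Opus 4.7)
The plan is to show that $\epsilon'_{i,j} \geq \epsilon_{i,j}$ holds pointwise for every pair $(i,j)$. Once this scalar inequality is in hand, each constraint in Equ.~(\ref{eq:prunableconstr}) tightens when $\epsilon_{i,j}$ is replaced by $\epsilon'_{i,j}$ (a larger reserved budget shrinks the coefficient $e^{(\epsilon - \epsilon_{i,j})d_{i,j}}$), so the feasible polytope of the approximated program is contained in that of the exact program. Because the robust-matrix problem in Section~\ref{subsec:robustmatrix} minimises the linear objective $\Delta(\mathbf{Z}^0)$ of Equ.~(\ref{eq:matrix_gen}) over this set, any optimum of the tightened program is feasible for the exact one and has objective value no smaller than the true minimum, which is precisely the ``upper bound of the solution'' claim.

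For the scalar inequality, abbreviate $c = e^{\epsilon d_{i,j}}$ and $M_j = \max_{|\mathcal{S}| \leq \delta}\sum_{l \in \mathcal{S}} z_{j,l}$. The exact $\epsilon_{i,j}$ in Equ.~(\ref{eq:robustGeoI}) requires maximising $\frac{1 - \sum_{l\in\mathcal{S}} z_{j,l}}{1 - \sum_{l\in\mathcal{S}} z_{i,l}}$ over a single $\mathcal{S}$ shared by numerator and denominator, whereas $\epsilon'_{i,j}$ depends only on $M_j$. The decoupling is provided by the Geo-Ind property that $\mathbf{Z}^0$ already satisfies (Equ.~(\ref{equ:Geo-Ind1})): the pointwise bound $z_{i,l} \leq c\,z_{j,l}$ gives $\sum_{l \in \mathcal{S}} z_{i,l} \leq c \sum_{l \in \mathcal{S}} z_{j,l}$ for every admissible $\mathcal{S}$, which majorises the coupled ratio by the one-variable expression $f(a) = (1-a)/(1-ca)$ with $a = \sum_{l\in\mathcal{S}} z_{j,l}$. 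A short derivative check gives $f'(a) = (c-1)/(1-ca)^2 \geq 0$ whenever $c \geq 1$, so $f$ is non-decreasing, and its maximum over $|\mathcal{S}|\leq \delta$ is attained at $a = M_j$. This closed-form majorant is exactly the quantity appearing inside the logarithm defining $\epsilon'_{i,j}$, yielding $e^{\epsilon'_{i,j} d_{i,j}} \geq e^{\epsilon_{i,j} d_{i,j}}$ and hence the required pointwise inequality.

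The main obstacle is exactly this decoupling step: the exact max couples the two rows of $\mathbf{Z}^0$ through the common subset $\mathcal{S}$, while the approximation treats them separately. The Geo-Ind pointwise bound is what makes the substitution possible, but it must be applied \emph{before} taking the max — applying it after would give a strictly weaker bound and fail to match the closed-form expression in Equ.~(13). A secondary technicality is positivity: one must check $1 - cM_j > 0$ (and hence $1 - M_j > 0$) for the logarithm and the monotonicity argument to be well-defined. The latter follows because $M_j$ is a partial row-sum of probabilities strictly less than one (else the rescaled pruned row would be degenerate, contradicting Equ.~(\ref{eq:prunedUnitProb})), and the former is an implicit feasibility hypothesis on $\delta$ — if it failed, even the exact $\delta$-prunable constraint in Definition~\ref{def:deltaPrune} would be vacuous. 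With these pieces in place, applying the scalar inequality to every $(i,j)$ gives the inclusion of feasible sets and therefore the claimed upper bound on the optimal objective, with the approximated optimum serving as an explicit certificate.
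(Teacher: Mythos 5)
Your high-level framing is sound and in fact more explicit than the paper: a pointwise inequality $\epsilon_{i,j}\leq\epsilon'_{i,j}$ enlarges each reserved budget, shrinks the coefficient $e^{(\epsilon-\epsilon_{i,j})d_{i,j}}$ in Equ.~(\ref{eq:prunableconstr}), hence yields an inclusion of feasible sets and an upper bound on the minimum of $\Delta(\mathbf{Z}^0)$. The genuine gap is in the scalar inequality itself, which is the whole content of the proposition. You applied the Geo-Ind bound to the \emph{denominator} of the ratio in Equ.~(\ref{eq:robustGeoI}), via $\sum_{l\in\mathcal{S}}z_{i,l}\leq c\sum_{l\in\mathcal{S}}z_{j,l}$ with $c=e^{\epsilon d_{i,j}}$, and obtained the majorant $f(M_j)=\frac{1-M_j}{1-cM_j}$. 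But the quantity inside the logarithm in the display defining $\epsilon'_{i,j}$ is $\frac{1-M_j/c}{1-M_j}$, and your claim that your majorant ``is exactly'' this quantity is false: a direct computation gives
\begin{equation*}
\frac{1-M_j}{1-cM_j}-\frac{1-M_j/c}{1-M_j}
=\frac{M_j\left(c-1\right)^2/c}{\left(1-cM_j\right)\left(1-M_j\right)}\;\geq\;0,
\end{equation*}
with strict inequality whenever $M_j>0$ and $c>1$. So your chain only proves $\epsilon_{i,j}\leq\frac{1}{d_{i,j}}\ln\frac{1-M_j}{1-cM_j}$, a quantity \emph{larger} than $\epsilon'_{i,j}$; this is implied by the proposition rather than implying it, and the feasible-set inclusion for the program that actually uses $\epsilon'_{i,j}$ does not follow. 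A second symptom of attacking the wrong side of the fraction is the extra hypothesis $1-cM_j>0$, i.e.\ $M_j<e^{-\epsilon d_{i,j}}$, which you had to posit as an ``implicit feasibility'' assumption; this fails easily for moderate $\epsilon d_{i,j}$, and the paper's argument never needs it.

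The repair is to decouple through the \emph{numerator}, which is the paper's route: Geo-Ind (Equ.~(\ref{equ:Geo-Ind1})) gives $z_{j,l}\geq e^{-\epsilon d_{i,j}}z_{i,l}$, hence $\frac{1-\sum_{l\in\mathcal{S}}z_{j,l}}{1-\sum_{l\in\mathcal{S}}z_{i,l}}\leq\frac{1-\frac{1}{c}\sum_{l\in\mathcal{S}}z_{i,l}}{1-\sum_{l\in\mathcal{S}}z_{i,l}}=g\left(\textstyle\sum_{l\in\mathcal{S}}z_{i,l}\right)$ with $g(t)=\frac{1-t/c}{1-t}$; since $g'(t)=\frac{1-1/c}{(1-t)^2}\geq0$ for $c\geq1$, the maximum over $|\mathcal{S}|\leq\delta$ is attained at the top-$\delta$ partial row sum, producing exactly the closed form $\frac{1-M/c}{1-M}$, and requiring only $M<1$ (partial sums of a probability row). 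Note the paper itself has an index slip here --- its proof lands on row-$i$ partial sums while the displayed definition of $\epsilon'_{i,j}$ writes row-$j$ sums (and its parenthetical ``$e^{\epsilon d_{i,j}}z_{j,l}\leq z_{i,l}$'' states Geo-Ind backwards) --- but under either reading the functional form to be matched is $\frac{1-M/c}{1-M}$, not your $\frac{1-M}{1-cM}$, so your identification step fails for both and the scalar inequality must be reproved along the numerator route.
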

\begin{proof}
\begin{eqnarray}
&& \epsilon_{i,j} \nonumber \\
& = & \frac{1}{d_{i,j}}\ln \left(\max_{\mathcal{S} \subseteq \mathcal{V}^{i,0} \mbox{ s.t. } |\mathcal{S}| \leq \delta} \frac{1 - \sum_{l \in \mathcal{S}}z_{j,l}}{1 - \sum_{l \in \mathcal{S}}z_{i,l}}\right) \nonumber\\ 
&\leq& \frac{1}{d_{i,j}}\ln \left(\max_{\mathcal{S} \subseteq \mathcal{V}^{i,0} \mbox{ s.t. } |\mathcal{S}| \leq \delta} \frac{1 - \frac{\sum_{l \in \mathcal{S}}z_{i,l}}{e^{\epsilon d_{i,j}}}}{1 - \sum_{l \in \mathcal{S}}z_{i,l}}\right) ~(\mbox{as $e^{\epsilon d_{i,j}} z_{j,l} \leq z_{i,l} $}) \nonumber\\
&\leq &  \frac{1}{d_{i,j}}\ln \left(\frac{1 - \frac{\max_{\mathcal{S} \subseteq \mathcal{V}^{i,0} \mbox{ s.t. } |\mathcal{S}| \leq \delta}\sum_{l \in \mathcal{S}}z_{i,l}}{e^{\epsilon d_{i,j}}}}{1 - \max_{\mathcal{S} \subseteq \mathcal{V}^{i,0} \mbox{ s.t. } |\mathcal{S}| \leq \delta}\sum_{l \in \mathcal{S}}z_{i,l}}\right) = \epsilon'_{i,j} \nonumber
\end{eqnarray}
\normalsize
\end{proof}

To calculate $\epsilon'_{i,j}$, we need to find the top $\delta$ number of $z_{j,l}$ with $v_l \in \mathcal{V}^{i,0}$, which takes $O(K \log K)$ in the worst case. 
According to Proposition \ref{prop:rpb_approximation}, by replacing  $\epsilon_{i,j}$ with $\epsilon'_{i,j}$ in Equ. (\ref{eq:prunableconstr}), we can obtain a sufficient condition of Equ. (\ref{eq:prunableconstr}). 

\begin{equation}
\label{eq:aec}
   z_{i,k} - e^{\left({\epsilon } - \epsilon'_{i,j}\right)d_{i,j}}z_{j,k}\leq 0, \forall i, j, k. 
\end{equation}

By replacing Equ. (\ref{eq:prunableconstr}) with this sufficient condition expressed in Equ. (\ref{eq:approximateepsilon}), we have the robust matrix generation problem which is an upper bound on the solution.

\begin{eqnarray}
\label{eq:RLP}
    \min ~ \Delta\left(\mathbf{Z}^{0}\right) &\mbox{s.t.} ~ \mbox{Equ. (\ref{eq:aec}) (\ref{equ:probabilityunit1}) are satisfied}
\end{eqnarray}

\begin{algorithm}[h]
\small 
    \SetAlgoLined
    \DontPrintSemicolon
    \SetKwFunction{FMain}{generateRobustMatrix}
    \SetKwProg{Fn}{Function}{:}{}  
    \Fn{\FMain{$\vNodeSet{}$, $Prob^0$, $\delta$, $\epsilon$, $t$}}{
        $i$ = $0$\; 
        $\vMatrix{}{i}[0, 0] \ldots [\vCount{ \vNodeSet{} } - 1, \vCount{\vNodeSet{}} - 1]$ = 0 \\ 
        $\vMatrix{}{i}$ = LPSolver($\vNodeSet{}$, $\epsilon$, $Prob^0$) \; \Comment{Matrix generated by solving Equ. (\ref{eq:LP})} \\
        $RPB[0, 0] \ldots [\vCount{ \vNodeSet{} } - 1, \vCount{\vNodeSet{}} - 1]$ = 0 \\
        \Do{$i \leq t$}{
            $i$ +=1 \; 
            $RPB$ = computeRPB($\vNodeSet{}{}$, $\vMatrix{}{i}$, $\delta$) \; \Comment{Reserved Privacy Budget (RPB) using Equ. (\ref{eq:approximateepsilon})} \\
            $\vMatrix{}{i}$ = LPSolver($\vNodeSet{}$, $\epsilon$, $Prob^0$, $RPB$) \; 
            \Comment{Matrix generated by solving Equ. (\ref{eq:RLP})} \\
}
\Return $\vMatrix{}{t}$\; 
}
    \caption{Robust matrix generation}
    \label{alg:robustmatrix}
\end{algorithm}

Algorithm~\ref{alg:robustmatrix} takes as input the set of nodes $\vNodeSet{}$, their prior probability distribution $Prob^0$, number of locations to be pruned $\delta$, privacy parameter $\epsilon$, and the number of iterations for convergence $t$ (which is determined empirically based on convergence experiments, see Section~\ref{sect:experiments}).
The non-robust matrix is generated first by solving the linear programming problem expressed in Equ. (\ref{eq:LP}) (Step 4).
For storing the Reserved Privacy Budget (RPB) for each pair of locations $\vNode{}{i}$ and $\vNode{}{j}$, we initialize a matrix denoted by RPB (Step 6).
We iteratively compute the RPB matrix using Equ. (\ref{eq:approximateepsilon}) and then use it to generate the matrix using the linear programming problem expressed in Equ. (\ref{eq:RLP}). This process is repeated for $t$ iterations until the RPB matrix, as well as the matrix generated using it, converges.
The robust obfuscation matrix is returned in the final step.

\subsection{Matrix Precision Reduction}
\label{sect:precisionreduction}

\begin{figure}
    \centering
    \includegraphics[width=1.00 \linewidth]{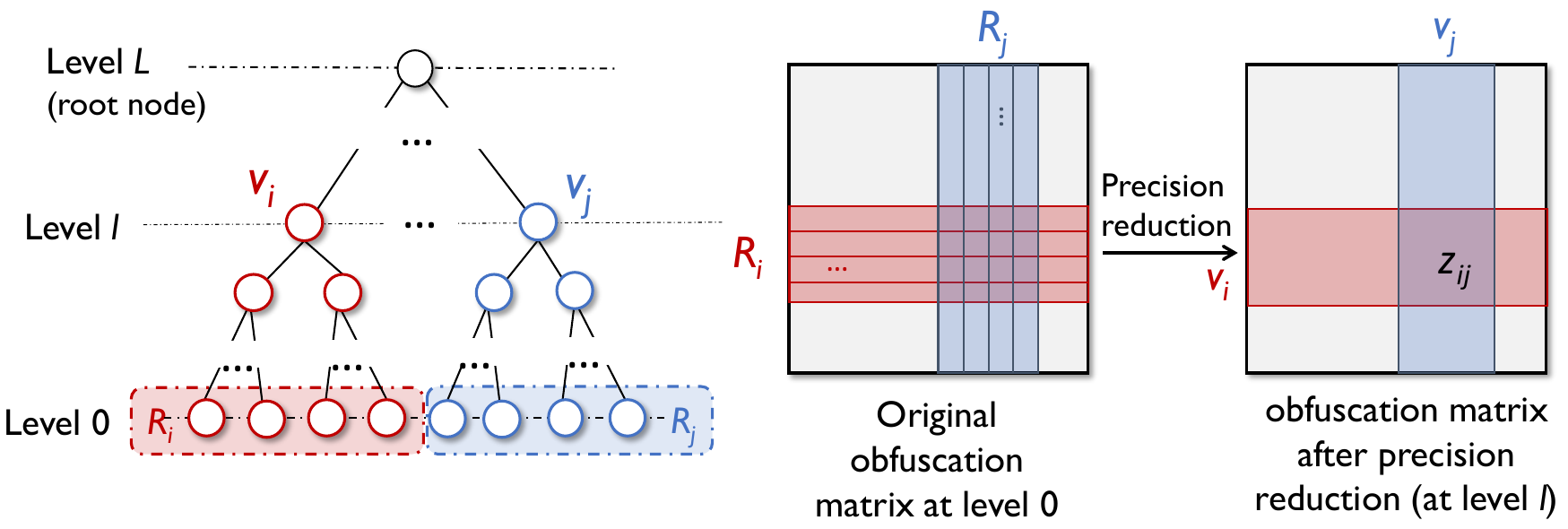}
    \vspace{-0.10in}
    \caption{Matrix precision reduction.}
    \label{fig:precisionreduction}
    \vspace{-0.10in}
\end{figure}

In Figure~\ref{fig:precisionreduction}, the original obfuscation matrix is generated for level 0, i.e., the set of leaf nodes. 
Suppose the user specifies a value $l$ as its  Precision\_l. \emph{Matrix precision reduction} generates the obfuscation matrix at level $l$, $\vMatrix{l}{}$ ($l > 0$), given the obfuscation matrix at level $0$, $\vMatrix{0}{}$.
As illustrated in the figure, the new matrix is generated by replacing all the rows of the descendant leaf nodes with their corresponding ancestor nodes at level $l$.
For each pair of nodes $\vNode{}{i}$ and $\vNode{}{j}$ at level $l$, we use $\children{\vNode{}{i}}$ and $\children{\vNode{}{j}}$ to represent the set of their descendant leaf nodes, respectively. 
The probability of selecting $\vNode{}{j}$ as the obfuscated location given the real location $\vNode{}{i}$ is calculated using Bayes' theorem.
\vspace{-0.00in}

\begin{equation}{\label{eq:precisionReduction}}
\vCell{l}{i,j} 
= \frac{\sum_{\vNode{}{m} \in \children{\vNode{}{i}}} \vPrior{\vNode{}{m}} \sum_{\vNode{}{n} \in \children{\vNode{}{j}}} \vCell{0}{m, n}}{\vPrior{\vNode{}{i}}}
\end{equation}
where $\vPrior{\vNode{}{m}}$ and $\vPrior{\vNode{}{i}}$ denote the prior distributions of $\vNode{}{m}$ and $\vNode{}{i}$ respectively. Note that, $\vPrior{\vNode{}{i}} = \sum_{\vNode{}{m} \in \children{\vNode{}{i}} } \vPrior{\vNode{}{m}}$.

\begin{proposition}
\label{prop:precisionreduction}
Matrix precision reduction preserves both probability unit measure and $\epsilon$-Geo-Ind. 
\end{proposition}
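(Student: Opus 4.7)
The plan is to establish the two claims separately using the closed form of the reduced matrix in Equation (\ref{eq:precisionReduction}), together with the partition property that $\{\mathcal{N}(v_j) : v_j \in \mathcal{V}^l\}$ is a partition of the leaf set $\mathcal{V}^{i,0}$ and that $p_{v_i} = \sum_{v_m \in \mathcal{N}(v_i)} p_{v_m}$.

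For the probability unit measure at level $l$, I would fix a row index $i$ and compute
\[
\textstyle \sum_{v_j \in \mathcal{V}^l} z^l_{i,j} = \frac{1}{p_{v_i}} \sum_{v_m \in \mathcal{N}(v_i)} p_{v_m} \sum_{v_j \in \mathcal{V}^l} \sum_{v_n \in \mathcal{N}(v_j)} z^0_{m,n}.
\]
Because $\mathcal{N}(v_j)$ partitions the leaves as $v_j$ ranges over level $l$, the inner double sum collapses to $\sum_{v_n \in \mathcal{V}^{i,0}} z^0_{m,n} = 1$ by Equation (\ref{equ:probabilityunit1}) applied to the feasible leaf-level matrix $\mathbf{Z}^0$. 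What remains is $\frac{1}{p_{v_i}} \sum_{v_m \in \mathcal{N}(v_i)} p_{v_m} = 1$ by the prior normalization identity, which closes this part.

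For the $\epsilon$-Geo-Ind claim, after applying Bayes' rule as in the excerpt, it suffices to show that $z^l_{i,k} \leq e^{\epsilon d_{i,j}} z^l_{j,k}$ for every triple of level-$l$ nodes $v_i, v_j, v_k$, where $d_{i,j}$ denotes the distance between the level-$l$ nodes (understood, consistent with a worst-case obfuscation guarantee, as $\max_{v_m \in \mathcal{N}(v_i),\, v_{m'} \in \mathcal{N}(v_j)} d_{m,m'}$). I would start from the leaf-level hypothesis $z^0_{m,n} \leq e^{\epsilon d_{m,m'}} z^0_{m',n}$ valid for all leaves $v_m, v_{m'}, v_n$, fix $v_m \in \mathcal{N}(v_i)$, sum both sides over $v_n \in \mathcal{N}(v_k)$, then average the right-hand side over $v_{m'} \in \mathcal{N}(v_j)$ weighted by $p_{v_{m'}}/p_{v_j}$ (which sums to one). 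Upper-bounding $d_{m,m'}$ by $d_{i,j}$ on the right gives
\[
\textstyle \sum_{v_n \in \mathcal{N}(v_k)} z^0_{m,n} \leq e^{\epsilon d_{i,j}} \cdot \frac{1}{p_{v_j}} \sum_{v_{m'} \in \mathcal{N}(v_j)} p_{v_{m'}} \sum_{v_n \in \mathcal{N}(v_k)} z^0_{m',n} = e^{\epsilon d_{i,j}} z^l_{j,k}.
\]
Multiplying by $p_{v_m}/p_{v_i}$ and summing over $v_m \in \mathcal{N}(v_i)$ then yields $z^l_{i,k} \leq e^{\epsilon d_{i,j}} z^l_{j,k}$, which is the desired Geo-Ind inequality at level $l$.

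The main obstacle is the distance definition for non-leaf nodes: Equation (\ref{equ:Geo-Ind1}) is stated only for leaves, so the privacy guarantee at a coarser level is meaningful only after one commits to how $d_{i,j}$ is measured between super-nodes. The cleanest (and most pessimistic, hence safest) choice is the pairwise maximum over descendants, which is what makes the bound $d_{m,m'} \leq d_{i,j}$ uniform in the two convex averages above; any other choice (e.g., distance between centroids) would require a separate argument. The remaining steps are routine algebraic manipulation of the Bayes-weighted sums in Equation (\ref{eq:precisionReduction}), relying only on non-negativity of the $z^0_{m,n}$'s, the partition structure of the location tree, and the already-proved probability unit measure.
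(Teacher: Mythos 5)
Your proof is correct, and its overall skeleton matches the paper's appendix proof: the probability unit measure follows from the partition of the leaf set together with the prior aggregation identity $p_{v_i} = \sum_{v_m \in \children{\vNode{}{i}}} p_{v_m}$, and the Geo-Ind half is a prior-weighted convex averaging of the leaf-level constraints. The genuine difference lies in how the two arguments treat distance. The paper collapses both weighted averages into a single double sum over descendant pairs and concludes from the termwise bound $z^{0}_{u,w} - e^{\epsilon} z^{0}_{v,w} \leq 0$ --- that is, it silently drops the distance factor from Equ.~(\ref{equ:Geo-Ind1}) (writing $e^{\epsilon}$ rather than $e^{\epsilon d_{u,v}}$) and never specifies what $d_{i,j}$ means between level-$l$ nodes. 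Your sequential version (sum over $\children{\vNode{}{k}}$, average over $\children{\vNode{}{j}}$ with weights $p_{v_{m'}}/p_{v_j}$, then average over $\children{\vNode{}{i}}$) is algebraically the same bilinear averaging, but your explicit commitment to $d_{i,j} = \max_{v_m \in \children{\vNode{}{i}},\, v_{m'} \in \children{\vNode{}{j}}} d_{m,m'}$ is precisely the missing ingredient that makes the paper's final inequality legitimate when distances are retained, since the termwise step needs $d_{m,m'} \leq d_{i,j}$ uniformly over descendant pairs. In that sense your proof patches a real gap rather than merely restyling the argument. One caveat on your framing: calling the max convention ``most pessimistic, hence safest'' is right for making the proof go through, but as a privacy statement a larger $d_{i,j}$ \emph{weakens} the indistinguishability claim between $v_i$ and $v_j$; what your argument establishes is that $\epsilon$-Geo-Ind is preserved with respect to the max-descendant metric, which is a weaker guarantee than the same inequality under, say, centroid distances, and that distinction should be stated when the proposition is invoked.
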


\begin{proof}
First, we check the probability unit measure. 
\begin{eqnarray}
\mathcal{V}_0 \nonumber = \cup_{s_j \in \mathcal{V}_l} \mathcal{R}_j 
    \Rightarrow \sum_{s_j \in \mathcal{V}_l} \sum_{s_v \in \mathcal{R}_j}z^{0}_{u,  v} = \sum_{s_v \in \mathcal{V}_0}z^{0}_{u,  v} = 1. 
\end{eqnarray}
We take sum of the entries in each row $i$ in $\mathbf{Z}^l$, 
\begin{eqnarray}
\sum_{s_j \in \mathcal{V}_l}z^{l}_{i,j} &=&  \sum_{s_j \in \mathcal{V}_l} \frac{\sum_{s_u \in \mathcal{R}_i} p_{u}  \sum_{s_v \in \mathcal{R}_j}z^{0}_{u,  v}}{p_{i}} \nonumber\\  
&=& \frac{\sum_{s_u \in \mathcal{R}_i} p_{u}  \left(\sum_{s_j \in \mathcal{V}_l}\sum_{s_v \in \mathcal{R}_{j}}z^{0}_{u,  v}\right)}{p_{i}} \nonumber \\
&=& \frac{\sum_{s_u \in \mathcal{R}_i} p_{u}}{p_{i}} = \frac{p_i}{p_{i}} = 1, \nonumber
\end{eqnarray}
i.e., each row $i$ satisfies the probability unit measure. 

\DEL{
\begin{eqnarray}
&& \sum_{s_v \in \mathcal{R}_j}\sum_{s_u \in \mathcal{R}_i} p_{u}p_v \left(e^{{\epsilon}d_{i,j}} - e^{{\epsilon}d_{u,v}}\right) \\
&=& e^{{\epsilon}d_{i,j}}\sum_{s_v \in \mathcal{R}_j}\sum_{s_u \in \mathcal{R}_i} p_{u}p_v \left( 1- e^{{\epsilon}\left(d_{u,v} - d_{i,j}\right)}\right) \\
&\leq & e^{{\epsilon}d_{i,j}}\sum_{s_v \in \mathcal{R}_j}\sum_{s_u \in \mathcal{R}_i} p_{u}p_v {\epsilon}\left(d_{i,j} - d_{u,v}\right) 
\end{eqnarray}}

We then check $\epsilon$-Geo-Ind for each column $k$  in $\mathbf{Z}^l$: $\forall s_i, s_j$
\begin{eqnarray}
&& z^{l}_{i,k} - e^{{\epsilon}}  z^{l}_{j,k} \nonumber \\ 
& = & \frac{\sum_{s_u \in \mathcal{R}_i} \sum_{s_w \in \mathcal{R}_k}p_{u} z^{0}_{u, w}}{p_{i}} - e^{{\epsilon}}   \frac{\sum_{s_v \in \mathcal{R}_j} \sum_{s_w \in \mathcal{R}_k}p_{v} z^{0}_{v,  w}}{p_{j}} \nonumber \\ 
&=& \sum_{s_w \in \mathcal{R}_k} \left(\frac{\sum_{s_v \in \mathcal{R}_j}\sum_{s_u \in \mathcal{R}_i} p_{u}p_v \left(z^{0}_{u,  w}- e^{{\epsilon}} z^{0}_{v, w}\right)   }{p_{i}p_{j}}\right) \nonumber \leq 0 \nonumber 
\end{eqnarray}
since $z^{0}_{u,  w}- e^{{\epsilon}} z^{0}_{v, w} \leq 0$ $\forall u, v, w$. 
\end{proof}

\begin{algorithm}[t]
    \SetAlgoLined
    \DontPrintSemicolon
    \SetKwFunction{FMain}{precisionReduction}
    \SetKwProg{Fn}{Function}{:}{}  
    \small 
    \KwInput{Obfuscation matrix (at level 0) $\vMatrix{0}{}$, Location Tree $\vTree{}$,  Precision Level $l$}
    \KwOutput{Obfuscation Matrix (at level l) $\vMatrix{l}{}$}
    \Fn{\FMain{$\vMatrix{0}{}$, $\vTree{}$, $l$}}{
        $\vNodeSet{}^l$ = getNodes($\vTree{}$, $l$) \Comment{Get nodes at precision level} \\ 
        $\vMatrix{l}{}[0, 0] \ldots [\vCount{ \vNodeSet{}^l } - 1, \vCount{\vNodeSet{}^l} - 1]$ = 0 \\
        \For{$i$ $\in$ $0, \ldots, \vCount{\vNodeSet{}^l} -1$}
        {   
            \For{$j$ $\in$ $0, \ldots, \vCount{\vNodeSet{}^l} - 1$}
            {
                    $num = 0, den$ = 0 \\
                    \For{$u$ $\in$ $0 \ldots \mid \children{\vNode{}{i}} \mid - 1$}
                    {
                    $row\_sum$ = 0 \\
                    \For{$v$ $\in$ $0 \ldots \mid \children{\vNode{}{j}} \mid - 1$}
                    {
                    $row\_sum$ = $row\_sum + \vCell{0}{u, v}$
                    } 
            $num$ = $num + \vPrior{\vNodeSet{}^0}[u] \times row\_sum$ \\
            $den = den + \vPrior{\vNodeSet{}^0}[u]$  \\
            }
            $\vCell{l}{i, j}$ = $\frac{num}{den}$ \\
            }
        }
        \Return{$\vMatrix{l}{}$}
    }
    \caption{Precision Reduction Function}
    \label{alg:precisionReduction}
\end{algorithm}

Algorithm~\ref{alg:precisionReduction} presents the approach for matrix precision reduction given the matrix for leaf nodes ($\vMatrix{0}{}$), the location tree ($\vTree{}$), the prior probability distribution of the leaf nodes $prob^0$, and the precision level ($l$) which specifies the granularity/height of the tree of the reported location.
First, we get the set of nodes ($\vNodeSet{}^l$) from level $l$ (Step 2).
We initialize the new obfuscation matrix with dimensions based on the set of nodes retrieved.
For each pair of location nodes ($\vNode{}{i}, \vNode{}{j}$) in $\vNodeSet{}^l$, we compute their corresponding probability in the new matrix ($\vCell{l}{i,j}$) by using the probabilities of their leaf nodes, $\children{\vNode{}{i}}$ and $\children{\vNode{}{j}}$ respectively, in Equ.~(\ref{eq:precisionReduction}) (Steps 4-16).
The prior probability distribution for the leaf nodes $\vPrior{\vNodeSet{}^0}$ in this subtree can be obtained by querying the server\footnote{We assume that the prior probability distribution is readily available based on publicly available information. We explain how it is computed for a real dataset in Section~\ref{sect:experiments}. We disregard communication and computation cost for this as it is a relatively small vector and only has to be periodically updated.}.
Finally, the new matrix $\vMatrix{l}{}$ for level $l$ is returned.
Thus using matrix precision reduction, we are able to save the overhead of generating the obfuscation matrix when the user chooses to share at a lower granularity than the leaf nodes.

\section{Location Obfuscation by CORGI}
\label{sect:architecture}
In this section, we describe the steps performed \systemName{} in order to generate the obfuscated location of a user.
This process is  sketched in Figure~\ref{fig:control_flow} and we explain the steps on the user and server side in detail below.

\begin{figure}
    \centering
    \includegraphics[width=1.00 \linewidth]{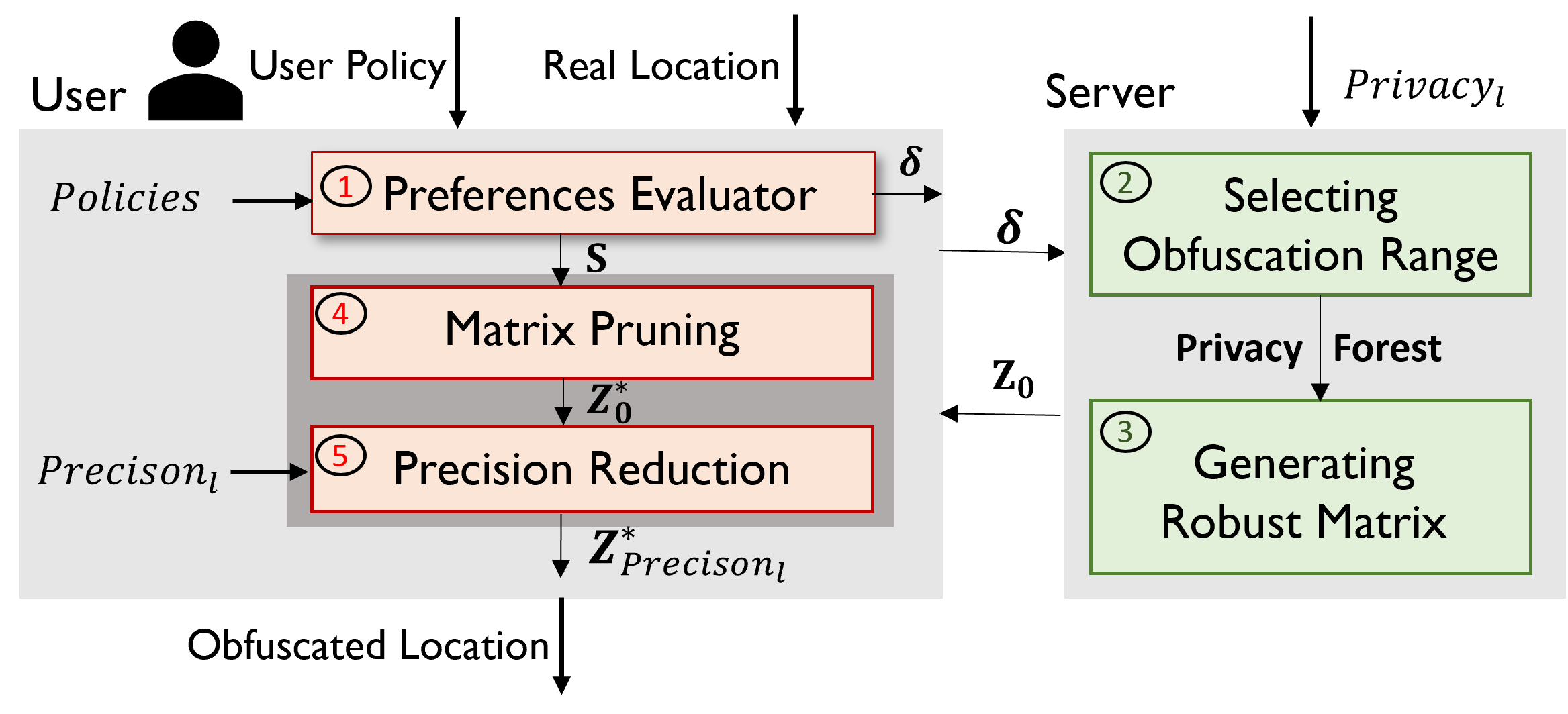}
    \vspace{-0.0in}
    \caption{Steps in generating the obfuscated location}
    \label{fig:control_flow}
    \vspace{-0.0in}
\end{figure}

\subsection{Server side}

\vspace{-0.0in}
\begin{algorithm}[h]
    \SetAlgoLined
    \DontPrintSemicolon
    \SetKwFunction{FMain}{generateMatrix}
    \SetKwProg{Fn}{Function}{:}{} 
    \small 
    \KwInput{Location Tree $\vTree{}$, Privacy Level $l$, Prune Parameter $\delta$}
    \KwOutput{Privacy Forest $PF$}
    \Fn{\FMain{$\vTree{}$, $l$, $\delta$}}{
        $\vNodeSet{}^l$ = getNodes($\vTree{}$, $l$) \\
        \textit{PF} = \{ \} \\
        \For{$\vNode{}{i}$ $\in \vNodeSet{}^l$}
        {
            $\vTree{}^i$ = findSubTree($\vNode{}{i}$, $\vTree{}$, $l$) \\
            $\vNodeSet{}^0$ = getNodes($\vTree{}^i$, $0$) \\
            $Prob^0$ = getPriorProbDist($\vNodeSet{}^0$) \\
            $\vMatrix{0}{}$ = generateRobustMatrix($\vNodeSet{}^0$, 
            $Prob^0$, $\delta$, $\epsilon$, $t$) \\
            \textit{PF[$\vTree{}^i$]} = $\vMatrix{0}{}$ \\ 
        }
        \Return{\textit{PF}}
    }
    \caption{Generate Robust Obfuscation Matrices based on Obsfucation Range}
    \label{alg:genObfuscation}
\end{algorithm}
\vspace{-0.0in}

\systemName{} on server side takes as input the Privacy level \textit{Privacy\_l }and the number of locations to be pruned ($\delta$).
Algorithm~\ref{alg:genObfuscation} describes the steps for determining the obfuscation range (represented by the privacy forest) and generating the obfuscation matrix.
First, the server determines the nodes at the given privacy level $\vNodeSet{}^{l}$ by performing a Breadth First Search in the Location Tree $\vTree{}$ (Step 2).
Second, it initializes the privacy forest as a dictionary where the key is a subtree and the value is the obfuscation matrix for the leaf nodes of that subtree (Step 3). 
The system then iterates through each node $\vNode{}{i}$ at the privacy level and generates an obfuscation matrix for each of them.
For this purpose, the server has to determine the subtree rooted at $\vNode{}{i}$ and perform a Depth First Search to determine the leaf nodes of that subtree (Steps 5-6). 
Next, it calls the \emph{generateRobustMatrix} (Algorithm~\ref{alg:robustmatrix}) with the set of leaf nodes and the number of locations to be pruned (Step 7) \footnote{The privacy parameter ($\epsilon$) and the number of iterations to convergence when generating the robust matrix ($t$) are set universally for all the users. $\epsilon$ is updated based on the user's customization needs.}.
The robust obfuscation matrix  $\vMatrix{0}{}$ thus generated for the leaf nodes is added to the dictionary with the subtree as the key (Steps 8-9). After iterating through all the nodes at the set privacy level and generating obfuscation matrices for each of their descendant leaf nodes, the final privacy forest \textit{PF} is returned.

\subsection{User side}
Users input their policies as well as their real location in order to generate the obfuscated location.
First, \systemName{} on user side determine the subtree $\vTree{}^i$ that contains user's real location and is rooted at $\vPolicySet{}$.Privacy\_l (Step 1).
We slightly abuse the notation here as $\vNode{}{i}$ denotes the real location as well as the node in the tree that contains the actual user's location. 
Next, the user preferences are evaluated on the leaf nodes of that subtree and determine the set of nodes ($\vPruneSet{}$) that are to be pruned. (Step 2 of Fig \ref{fig:control_flow}).
The number of locations in this set ($\vCount{\vPruneSet{}}$) along with  $\vPolicySet{}$.Privacy\_l is passed to the server side (Step 4).
From the privacy forest returned by the server (based on Algorithm~\ref{alg:genObfuscation}), the obfuscation matrix $\vMatrix{0}{}$  corresponding to the subtree that contains the user's real location is selected (Step 5).
Next, the system prunes this matrix by calling the Matrix Pruning Algorithm (\textit{pruneMatrix}) with the set of nodes to be pruned (Step 6).
The pruned matrix $\vMatrix{0}{*}$ is updated to reflect the required granularity specified in $\vPolicySet{}$.Precision\_l (Step 7).
From this final matrix, the row corresponding to the node at $\vPolicySet{}$.Precision\_l, which contains the ancestor of the real location of the user is selected.
The obfuscated location $\vNode{}{j}$ is selected from the row by sampling based on the probability distribution (Steps 8).

\vspace{-0.1in}
\begin{algorithm}[h]
    \SetAlgoLined
    \DontPrintSemicolon
    \SetKwFunction{FMain}{generateObsfucatedLocation}
    \SetKwProg{Fn}{Function}{:}{}  
    \small 
    \KwInput{Location Tree $\vTree{}$, Real Location $\vNode{}{i}$, Policy $\vPolicySet{}$}
    \KwOutput{Obfuscated Location $\vNode{}{j}$}
    \Fn{\FMain{$\vNode{}{i}$, $\vPolicySet{}$}}{
        $\vTree{}^i$ = findSubTree($\vNode{}{i}$, $\vTree{}$, $\vPolicySet{}$.Privacy\_l) \\
        $\vPruneSet{}$ = eval($\vTree{}^i$, $\vPolicySet{}$.User\_Preferences) \\
        \textit{PF} = generateMatrix($\vTree{}$, $\vPolicySet{}$.Privacy\_l, $\vCount{\vPruneSet{}})$ \\
        $\vMatrix{0}{}$ = \textit{PF}[$\vTree{}^i$] \\
        $\vMatrix{*}{0}$ = pruneMatrix( $\vMatrix{0}{}$, $\vPruneSet{}$) \\
        $\vMatrix{*}{l}$ = precisionReduction($\vMatrix{*}{0}$, $\vTree{}^i$, $\vPolicySet{}$.Precision\_l) \\
        $\vNode{}{j}$ = \textit{sample(}$\vMatrix{*}{l}$[\textit{ancestor(}$\vNode{}{i}, \vPolicySet{}$.Precision\_l)])) \\
        \Return{$\vNode{}{j}$} \\ 
    }
    \caption{Generate Obfuscated Location}
    \label{alg:genObsLocation}
\end{algorithm}
\vspace{-0.0in}

\subsection{Discussion}
\label{subsec:discussion}

It is possible that when evaluating user preferences at the time of location sharing more than $\delta$ locations need to be pruned based on the user preferences.  
In such a situation, there are two options for customization: (1) Satisfy all the user preferences which results in a set of locations to be pruned $\vPruneSet{}$ where $\vCount{\vPruneSet{}} > $ $\delta$ which leads to Geo-Ind violation, (2) Satisfy some of the policies in $\vPolicySet{}$ such that $\vCount{\vPruneSet{}} \leq \delta$ locations which leads to \emph{policy violations} (there exists a location $\vNode{}{} \in \vMatrix{}{}$ such that it does not satisfy a policy in $\vPolicySet{}$). 
Both these violations may occur if based on the policies a large number of locations have to be pruned from the matrix i.e., $\vCount{\vPruneSet{}}$ is large.
In such a case, \systemName{} finds it impossible to meet the $\delta$ requested by the user as well as generate an obfuscation matrix that is robust.

In this work, we have used approximation techniques in order to reduce the number of constraints (see Section~\ref{subsec:graphapprox}, Section~\ref{subsec:robustmatrix}).
An alternative method is to incorporate optimization decomposition in the linear programming model itself (similar to ~\cite{qiu2020time}) which would lead to improvement in utility.
Currently, \systemName{} supports point queries and does not handle trajectory data. This can be extended by replacing the privacy notion of Geo-Indistinguishability with Trajectory Indistinguishability~\cite{trafficadaptorsigspatial22} and allowing users to customize locations along their trajectories while ensuring various semantic constraints are met (e.g., road networks). 
Local Differential Privacy (LDP) has recently emerged as an approach to avoid using a centralized server and perturb users' data locally before it leaves their device~\cite{kasiviswanathan2011can}. 
However, to the best of our knowledge most previous works that utilize LDP have mainly focused on releasing population statistics and not location privacy as presented in this paper~\cite{DBLP:journals/corr/abs-2008-03686}.

\section{Experiments}
\label{sect:experiments}

\subsection{Experimental setup}

\textbf{Datasets:} We use the Gowalla dataset~\cite{cho2011friendship} for our experiments. Gowalla  is a location-based social networking website where users share their locations by checking-in. 
The dataset includes 
 check-in information, which has the following attributes: \textit{[user, check-in time, latitude, longitude, location id]}. 
We sampled the user check-ins from the San Francisco (USA) region in the Gowalla dataset. We choose this region because it had a dense distribution of check-ins distributed over a large area.
Overall, this sample includes 38,523 check ins. 
We generated the root node which covers the entire region at resolution 6 followed by the children for this root node at resolution 7.
We repeated the process two more times and generated a tree of height 3 with 343 leaf nodes.
For generating customization policies, we analyzed the sample and came up with simple heuristics to identify a user's home, office, and their outlier locations (where the user visited rarely and at odd times). 
We also analyzed the number of check-ins per location in order to identify what locations are popular and at what times.
Using this metadata we generated realistic user preferences such as \textit{home = ``False'', outlier = ``False'', popular = ``True''} .

\noindent \textbf{Priors:} We computed prior probability for the leaf nodes in the generated location tree by counting number of user check-ins within that node.
For intermediate nodes (higher up in the tree), the prior was computed by aggregating the priors of its children nodes.

\noindent \textbf{Baseline:} We used the commonly used mechanism of \emph{linear programming (LP)} for implementing the baseline \cite{Bordenabe-CCS2014, Qiu-TMC2020, Wang-WWW2017}. We call this baseline as \textit{non-robust} because this mechanism is not robust against removal of locations from the obfuscation range on the user side.

\noindent \textbf{Implementation:} All of the algorithms were implemented in Mathlab.
We used the state-of-the-art Linear Programming tool kit from Mathlab.
The data and location tree were stored in main memory. The full implementation including scripts to run the experiments is available on Github\footnote{\url{https://github.com/User-Privacy/CORGI}}. 
The experiments were run on a 4-core machine with 256 GB ram.

\vspace{-0.00in}
\subsection{Experimental results}

\subsubsection{Convergence}
In this experiment, we test the convergence of the \emph{quality loss} of \systemName{}, measured as the mean estimation error of traveling distance (implemented using \emph{Haversine distance}) to all the target locations. We set $NR\_TARGET$ = 49 (number of target locations that are randomly selected from a list of leaf nodes), $\epsilon$ = 15 km$^{-1}$ and used the priors generated from the Gowalla dataset. We ran two sets of experiments: when $\delta = 2$ and $\delta = 4$, where $\delta$ is the number of locations that the user wishes to remove after customization.
In each group, we ran the experiment for 10 times, and depicted the convergence of the quality loss in Fig. \ref{fig:experiment1}(a)(c) (when $\delta =2$) and Fig. \ref{fig:experiment1}(b)(d) (when $\delta =4$). In all four figures, the $x$-axis denotes the iteration index. In Fig. \ref{fig:experiment1}(a)(b), the $y$-axis 
represents the quality loss, while in Fig. \ref{fig:experiment1}(c)(d), the $y$-axis represents the difference between quality loss in consecutive iterations.  Here, a lower value on the $y$-axis denotes better convergence as there is little difference between entries in the matrix after each round.
As illustrated in Figure~\ref{fig:experiment1}(a)(b)(c)(d), the differences between quality loss in consecutive iterations converges in approximately 4 iterations for both values of $\delta$. For the rest of the experiments, we terminate the program after 10 iterations.

\begin{figure}[t]
\centering
\hspace{-0.0in}
\begin{minipage}{0.22\textwidth}
\centering
  \subfigure[\small $\delta$ = 2 (objective value)]{
\includegraphics[width=1.00\textwidth, height = 0.110\textheight]{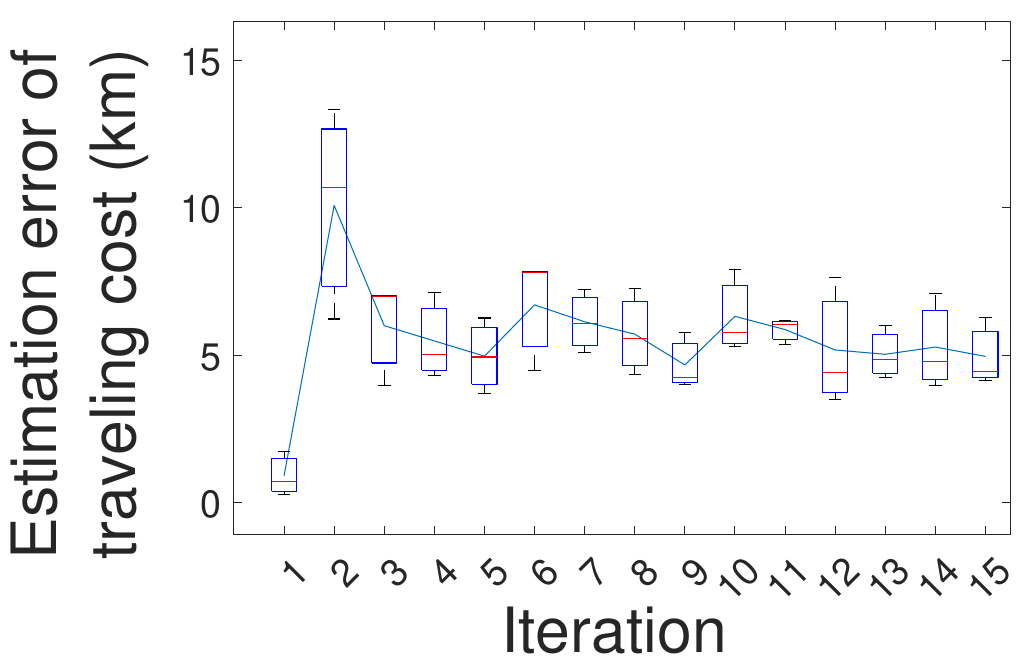}}
\label{fig:convergence2}
\end{minipage}
\hspace{0.1in}
\begin{minipage}{0.22\textwidth}
\centering
  \subfigure[\small $\delta$ = 4 (objective value)]{
\includegraphics[width=1.00\textwidth, height = 0.110\textheight]{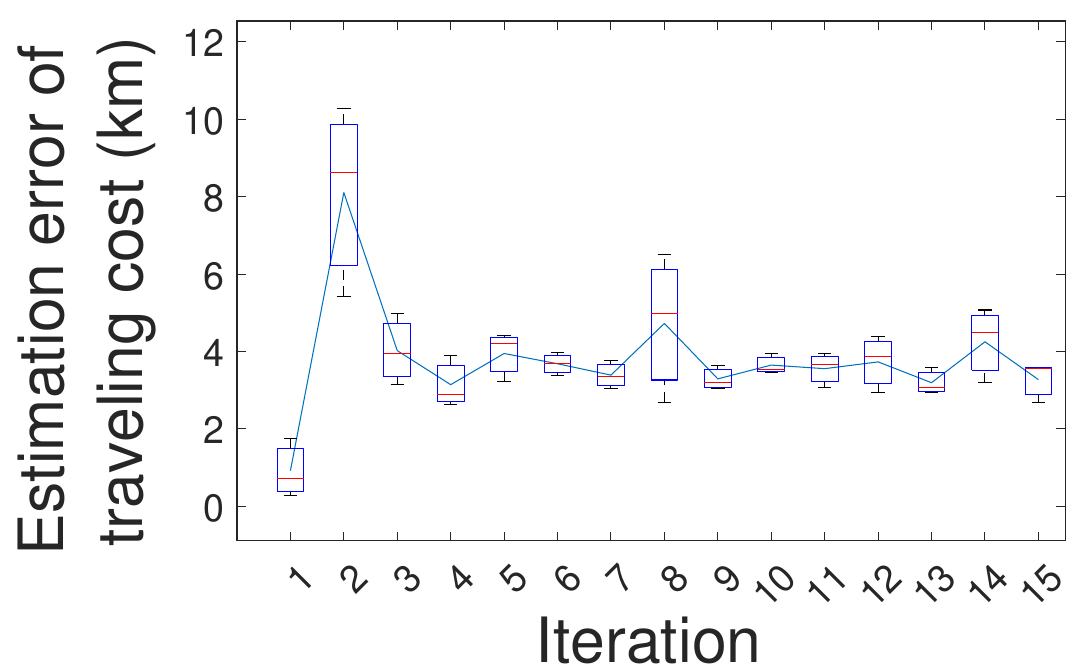}}
\label{fig:convergence1}
\end{minipage}
\begin{minipage}{0.22\textwidth}
\centering
  \subfigure[\small $\delta$ = 2 (difference of the objective value in consecutive iterations)]{
\includegraphics[width=1.00\textwidth, height = 0.110\textheight]{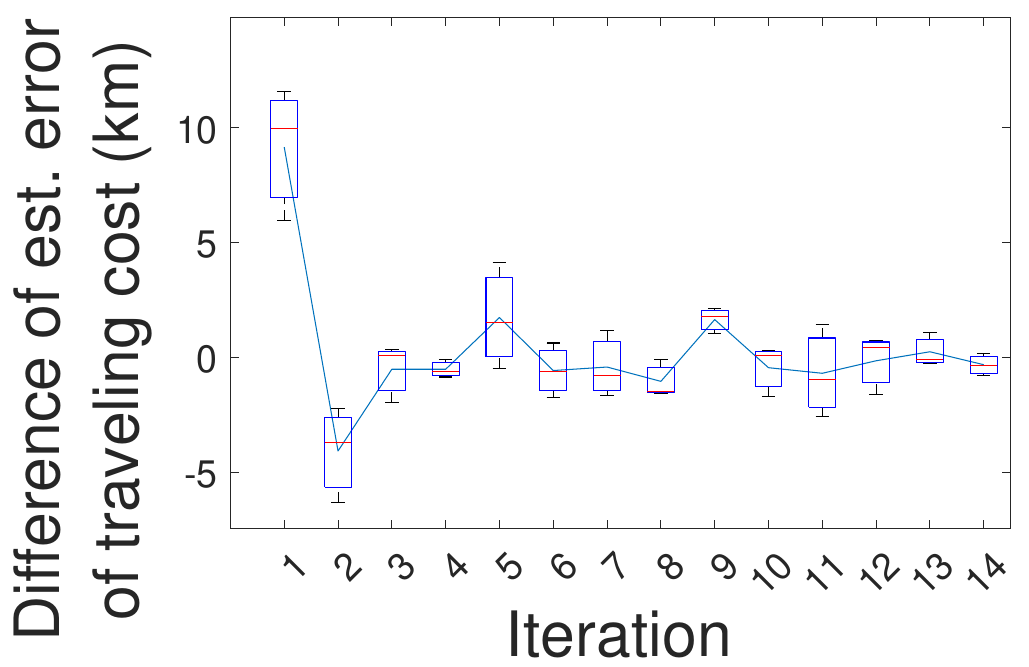}}
\label{fig:convergence3}
\end{minipage}
\hspace{0.1in}
\begin{minipage}{0.22\textwidth}
\centering
  \subfigure[\small $\delta$ = 4 (difference of the objective value in consecutive iterations)]{
\includegraphics[width=1.00\textwidth, height = 0.110\textheight]{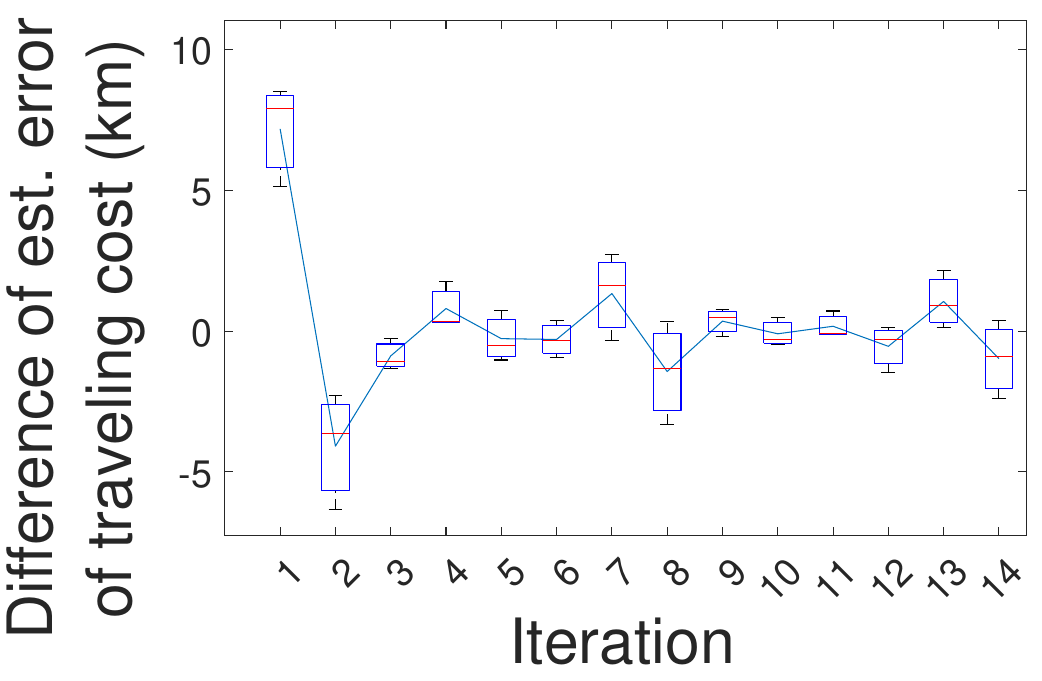}}
\label{fig:convergence4}
\end{minipage}
\vspace{-0.00in}
\caption{\normalsize Convergence of the objective value (estimation error of traveling costs).}
\label{fig:experiment1}
\vspace{-0.2in}
\end{figure}

\subsubsection{Computation time of the obfuscation matrix generation} 
\systemName{} uses graph approximation to improve the time-efficiency of the obfuscation matrix generation (Section \ref{subsec:graphapprox}). In this experiment, we evaluate how much computation time is reduced by the graph approximation. Fig. \ref{fig:experiment3a}(a) compares the computation time with and without graph approximation, with $\delta$ increased from 1 to 7. Fig. \ref{fig:experiment3a}(a) demonstrates that the graph approximation has reduced the running time by 92.34\% on average. The graph approximation improves the time efficiency of the matrix generation significantly since it reduces the number of Geo-Ind constraints from $O(K^3)$ to $O(K^2)$. Fig. \ref{fig:experiment3a}(b) compares the number of Geo-Ind constraints without and with graph approximation, with the number of locations increasing from 7 to 49. The figure shows that the number of Geo-Ind constraints is reduced by 54.58\% on average.

\begin{figure}[t]
\centering
\begin{minipage}{0.23\textwidth}
\centering
  \subfigure[\small Running time]{
\includegraphics[width=1.00\textwidth, height = 0.140\textheight]{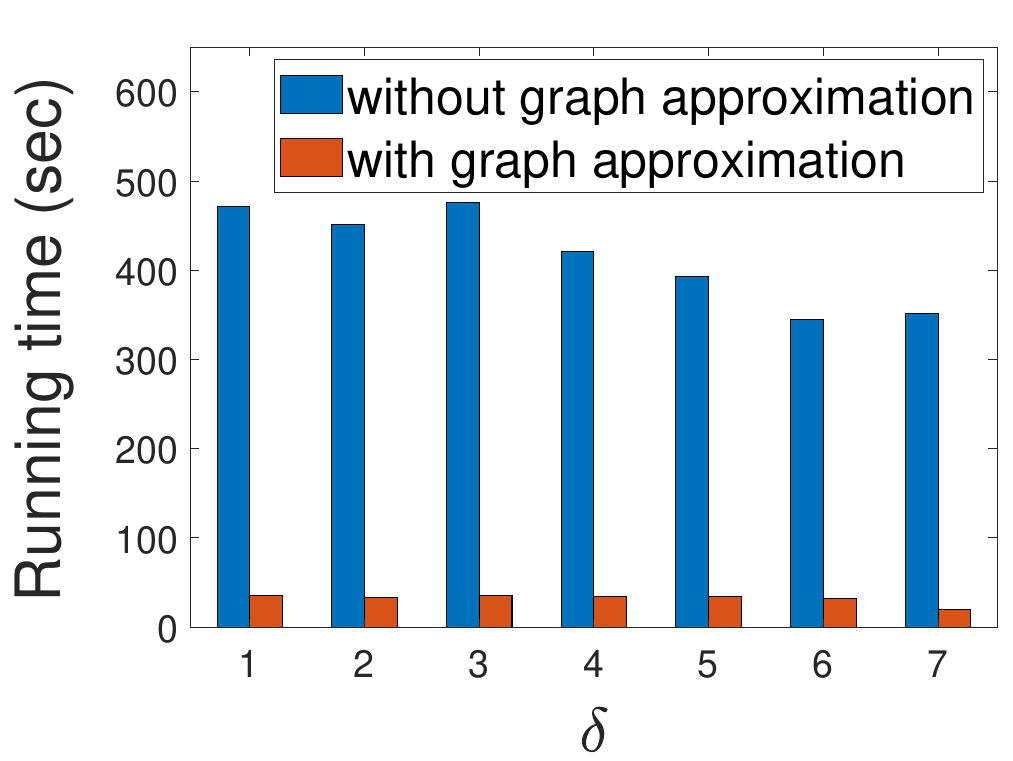}}
\end{minipage}
\hspace{-0.0in}
\begin{minipage}{0.23\textwidth}
\centering
  \subfigure[\small Number of Geo-Ind constraints]{
\includegraphics[width=1.00\textwidth, height = 0.140\textheight]{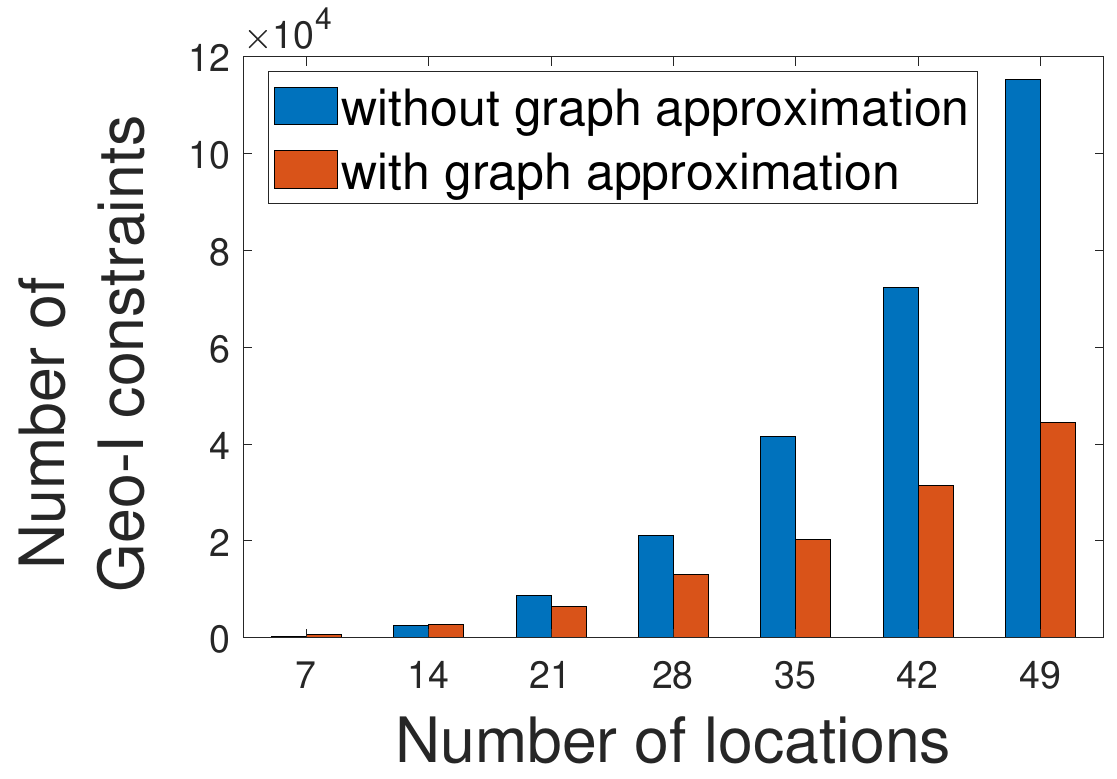}}
\end{minipage}
\vspace{-0.10in}
\caption{\normalsize Efficacy of Graph Approximation}
\label{fig:experiment3a}
\vspace{-0.10in}
\end{figure}

\subsubsection{Impact of privacy parameters}

\begin{figure}[t]
\centering
\begin{minipage}{0.30\textwidth}
\centering
  \subfigure{
\includegraphics[width=1.00\textwidth, height = 0.160\textheight]{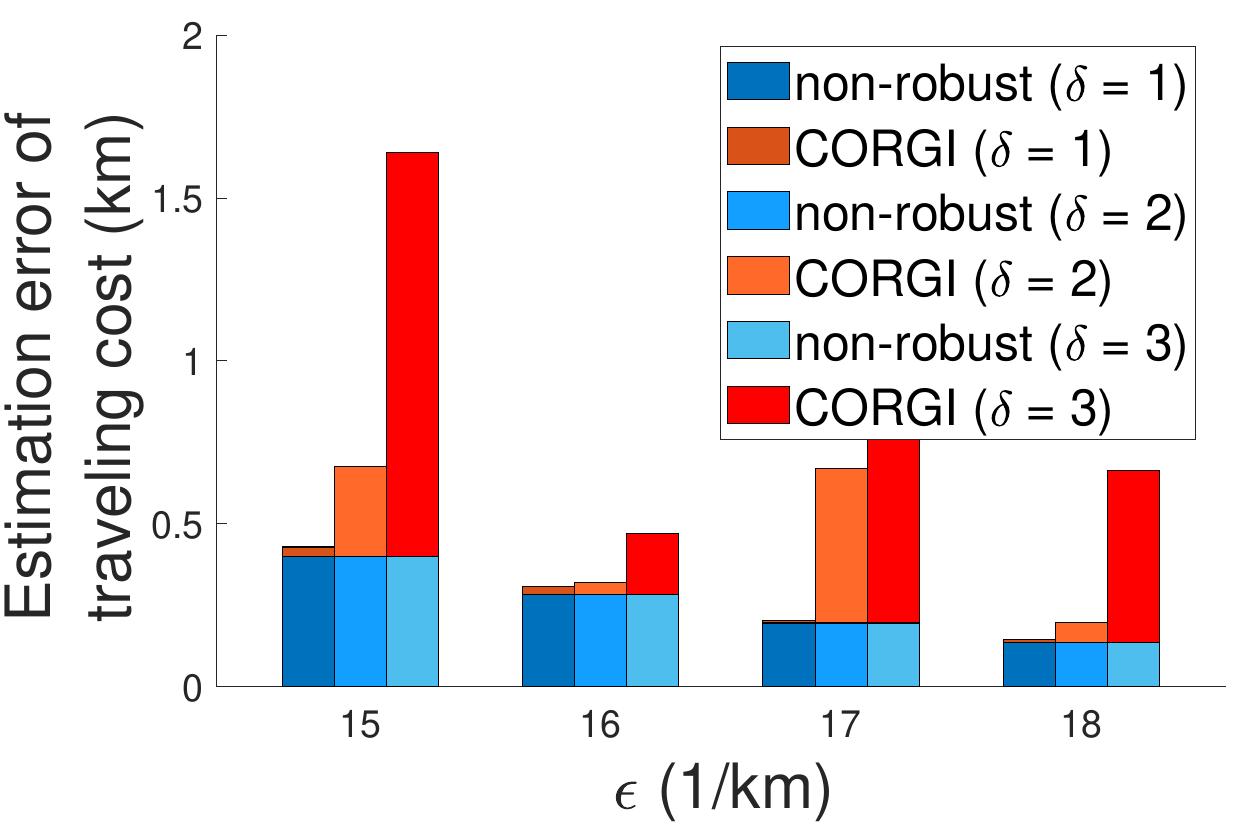}}
\end{minipage}
\vspace{-0.00in}
\caption{\normalsize Impact of privacy parameter ($\epsilon$) and customization parameter ($\delta$) on quality}
\label{fig:experiment3b}
\vspace{-0.00in}
\end{figure}

\DEL{
\begin{figure}[t]
\centering
\begin{minipage}{0.23\textwidth}
\centering
  \subfigure[\small $\epsilon$=50]{
\includegraphics[width=1.00\textwidth, height = 0.110\textheight]{./fig/exp/costdifferentdelta}}
\end{minipage}
\hspace{-0.0in}
\begin{minipage}{0.23\textwidth}
\centering
  \subfigure[\small  $\epsilon$=70]{
\includegraphics[width=1.00\textwidth, height = 0.110\textheight]{./fig/exp/costdifferentdelta_70}}
\end{minipage}
\vspace{-0.00in}
\caption{\normalsize $\delta$ vs Utility}
\label{fig:experiment3}
\vspace{-0.00in}
\end{figure}}

In this experiment, we test the impact of privacy parameter $\epsilon$ and customization parameter $\delta$ on quality loss. 
The Gowalla dataset was split into training (90\%) and testing (10\%) portions which were used for computing priors and  sampling ``real locations''  of the user. When generating the matrix, we set $NR\_TARGET$ = 49 (same as earlier) and used priors from the Gowalla dataset. We compared our results against the baseline (``non-robust'') approach which has $\delta$ = 0 and therefore is not robust against pruning of any locations from the matrix, and depict the results Fig. \ref{fig:experiment3b}(a)(b). In both Fig. \ref{fig:experiment3b}(a)(b), the $y$-axis denotes the quality loss. %
In Fig. \ref{fig:experiment3b}(a), the $x$-axis denotes the $\epsilon$ value that ranges from 15/km to 20/km in increments of 1/km. In Fig. \ref{fig:experiment3b}(b),  the $x$-axis denotes the $\delta$ value that ranges from 1 to 5. 
As illustrated in Fig.~\ref{fig:experiment3b}(a), with increasing privacy parameter $\epsilon$ the quality loss decreases, since higher $\epsilon$ implies weaker Geo-Ind constraints and hence lower quality loss.
As Fig.~\ref{fig:experiment3b}(b) shows, higher $\delta$ also introduces higher quality loss, as higher privacy budget $\epsilon'_{i,j}$ is needed for each pair of real locations $v_i$ and $v_j$ (according to Equ. (\ref{eq:approximateepsilon})). %

\vspace{-0.0in}
\subsubsection{Impact of pruning locations}

\DEL{
\begin{figure}[t]
\centering
\begin{minipage}{0.23\textwidth}
\centering
  \subfigure[\small Number of removed locations = 5]{
\includegraphics[width=1.00\textwidth, height = 0.110\textheight]{./fig/exp/violations}}
\end{minipage}
\hspace{-0.0in}
\begin{minipage}{0.23\textwidth}
\centering
  \subfigure[\small Number of removed locations = 7]{
\includegraphics[width=1.00\textwidth, height = 0.110\textheight]{./fig/exp/violations_7}}
\end{minipage}
\vspace{-0.00in}
\caption{\normalsize $\delta$ vs Privacy Violations}
\label{fig:experiment4}
\vspace{-0.00in}
\end{figure}}

\begin{figure}[t]
\centering
\begin{minipage}{0.22\textwidth}
\centering
  \subfigure[$\delta=3$]{
\includegraphics[width=1.00\textwidth, height = 0.120\textheight]{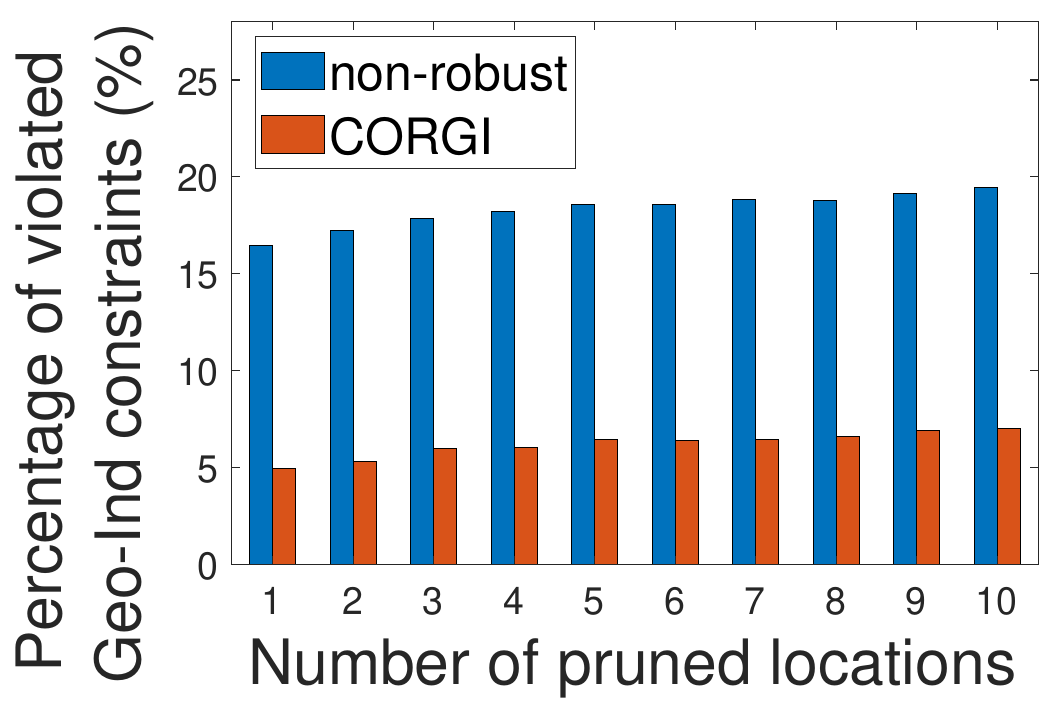}}
\end{minipage}
\hspace{0.1in}
\begin{minipage}{0.22\textwidth}
\centering
  \subfigure[$\delta=5$]{
\includegraphics[width=1.00\textwidth, height = 0.120\textheight]{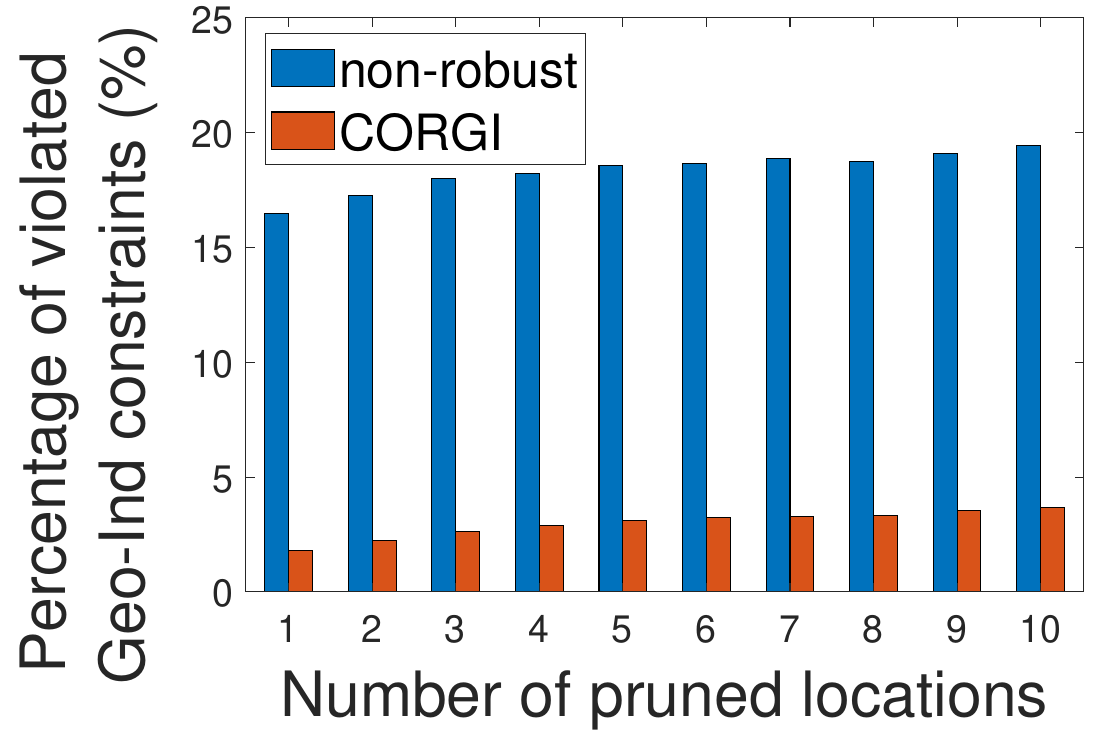}}
\end{minipage}
\vspace{-0.10in}
\caption{\normalsize Impact of customization parameter ($\delta$) on Geo-Ind violations. }
\label{fig:experiment4a}
\vspace{-0.10in}
\end{figure}

Users might not strictly follow the preferences that, only $\delta$ locations can be pruned (see Section~\ref{subsec:discussion}). Therefore, in this experiment, we test the impact of the number of locations pruned on the quality loss, especially when this number is higher than $\delta$. We create 10 experiment groups, wherein each group $n$ ($n = 1, ..., 10$), we let a user randomly prune $n$ locations from the leaf nodes of the location tree and run the experiment 500 times. We test both \systemName{} and the baseline and depict the results in Fig.  \ref{fig:experiment4a}(a)(b), where the number of locations is 49 and 70, respectively. In both figures, the $x$-axis denotes the number of locations pruned by a user, which is increased from 1 to 10, and the $y$-axis denotes the number of Geo-Ind constraint violations. 
As expected, the number of privacy violations in the non-robust matrix is much higher than that of the robust matrix. For example, pruning 14.28\% locations only causes 3.07\% Geo-Ind constraint violations in the matrix generated by \systemName{}, while it causes 18.58\% Geo-Ind constraint violations in the non-robust matrix. We also observe that with higher $\delta$, \systemName{} is more robust to the pruned locations as it preserves a higher privacy budget in the Geo-Ind constraints. The small number of privacy violations in some robust matrices is due to, 1) the number of pruned locations is greater than $\delta$ (the maximum number of locations expected to be removed) and 2) the robust matrix generation algorithm only converges to a relatively small threshold instead of 0 in consecutive iterations, indicating the output matrices might still have a small number of entries violating the preserved privacy budget.

\subsubsection{Impact of privacy level on quality loss} In this experiment, we test the quality loss of \systemName{} given different privacy levels. The location tree has four levels, where level 3 includes the root node covering 343 locations, level 2, 1, and 0 includes 49 locations, 7 locations, and 1 location, respectively. Here, we compare two possible choices from users: \textcircled{1} privacy level = 3 (with precision level = 1), and
\textcircled{2} privacy level = 2 (with precision level = 0). Fig. \ref{fig:experiment4b}(a)(b) compare the quality loss of the two choices given different $\epsilon$ and $\delta$ values. Not surprisingly, the quality loss of both choices decreases with the increase of $\epsilon$ and increases with the increase of $\delta$, which are consistent with the results in Fig. \ref{fig:experiment3b}. Furthermore, the quality loss of privacy level 3 is higher than that of privacy level 2, since level 3 has a wider range of obfuscated locations to select for users (covering 343 leaf nodes) compared to level 2 (covering 343 leaf nodes), and hence leads to a higher distortion between estimation error of traveling cost. 

\begin{figure}[t]
\centering
\begin{minipage}{0.22\textwidth}
\centering
  \subfigure[Quality loss with different $\epsilon$]{
\includegraphics[width=1.00\textwidth, height = 0.110\textheight]{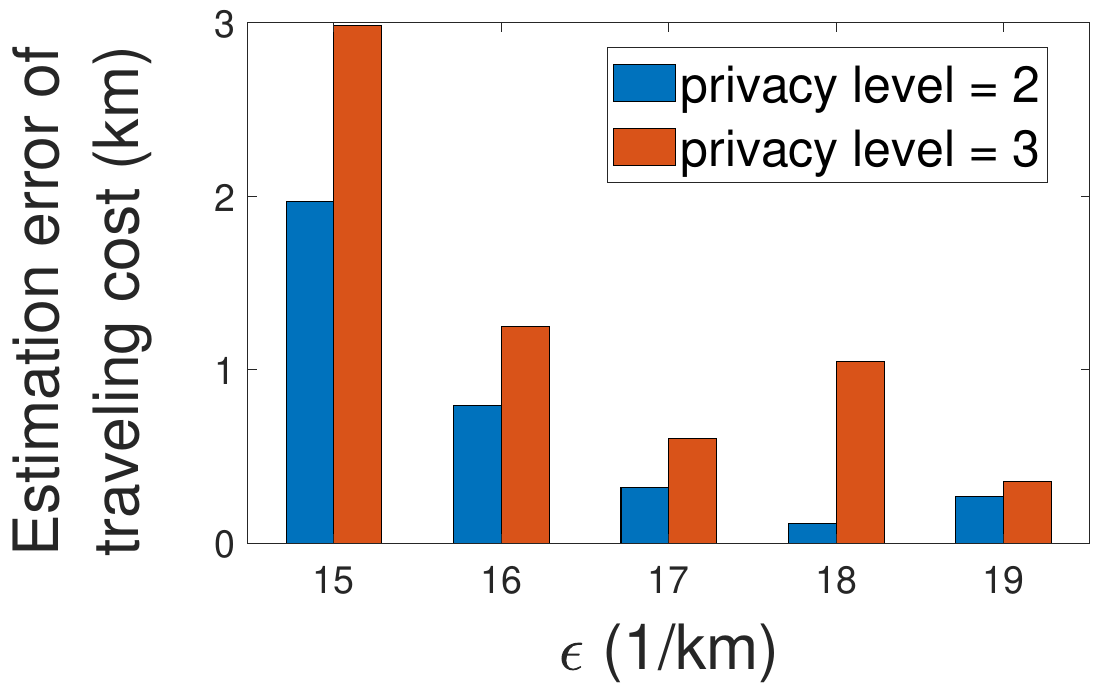}}
\end{minipage}
\hspace{0.1in}
\begin{minipage}{0.22\textwidth}
\centering
  \subfigure[Quality loss with different $\delta$]{
\includegraphics[width=1.00\textwidth, height = 0.110\textheight]{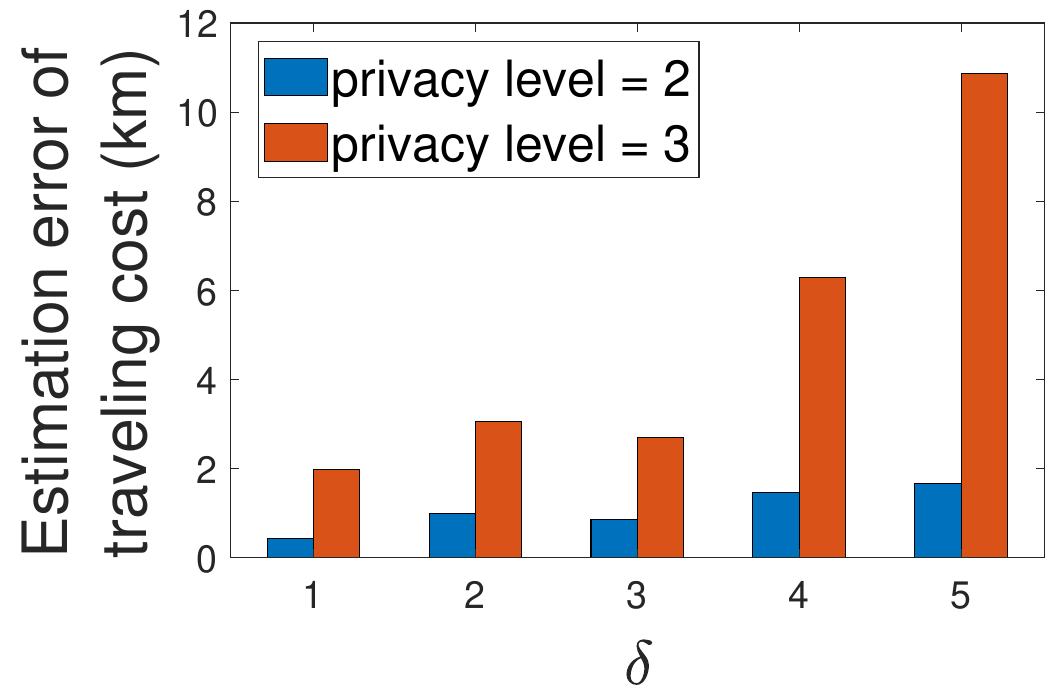}}
\end{minipage}
\vspace{-0.10in}
\caption{\normalsize Impact of obfuscation range (privacy level) on quality loss.}
\label{fig:experiment4b}
\vspace{-0.20in}
\end{figure}

\subsubsection{Computation time (precision reduction vs. matrix recalculation)} 
Recall that in \systemName{}, the server first generates the obfuscation matrix at the bottom level. If a user selects an obfuscation matrix at a higher level (which has a lower precision level), then instead of recalculating the matrix, \systemName{} generates the matrix via the precision reduction of the matrix at the bottom level. As such, in the last experiment, we test the computation time of precision reduction with the comparison of matrix recalculation. Fig. \ref{fig:experiment5}(a)(b)  shows the running time of the two approaches given the different numbers of locations (from 28 to 70) and different $\delta$ (from 1 to 7). Both figures demonstrate that precision reduction can significantly reduce the computation time compared to matrix recalculation, e.g., on average, the computation time of precision reduction is only 0.000073\% of that of the matrix recalculation. 
\begin{figure}[t]
\centering
\begin{minipage}{0.22\textwidth}
\centering
  \subfigure[Running time with different number of locations]{
\includegraphics[width=1.00\textwidth, height = 0.120\textheight]{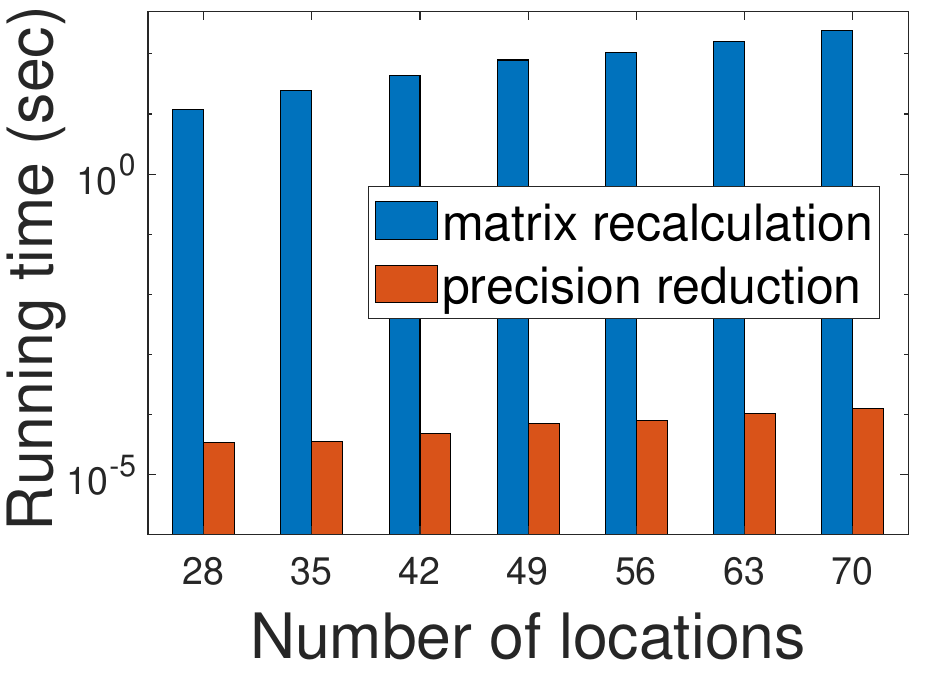}}
\end{minipage}
\hspace{0.1in}
\begin{minipage}{0.22\textwidth}
\centering
  \subfigure[Running time with different values of $\delta$]{
\includegraphics[width=1.00\textwidth, height = 0.120\textheight]{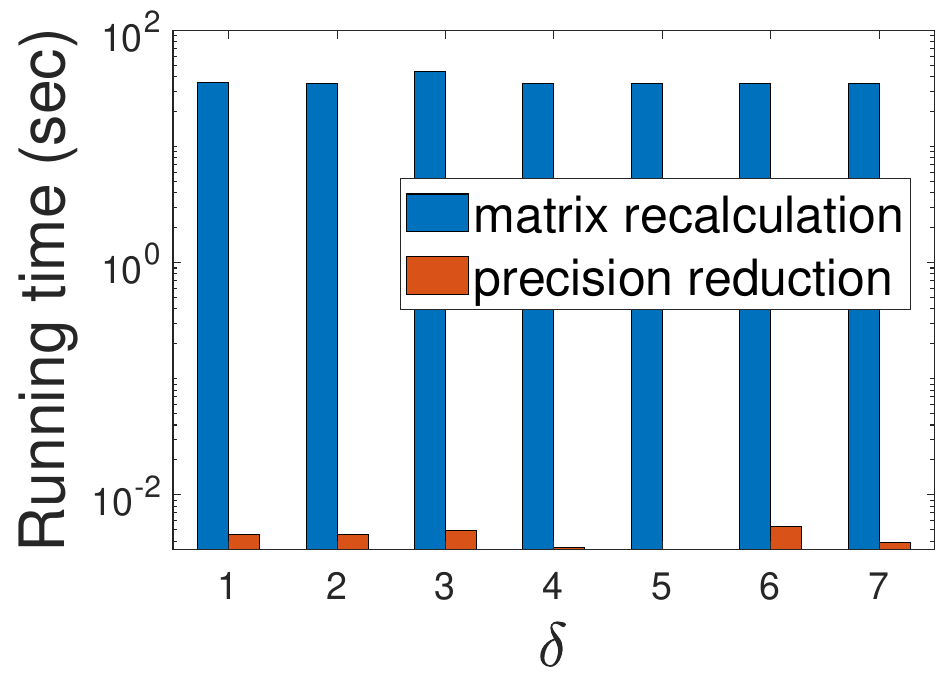}}
\end{minipage}
\vspace{-0.00in}
\caption{\normalsize Efficacy of precision reduction. }
\label{fig:experiment5}
\vspace{-0.00in}
\end{figure}

\vspace{-0.0in}
\section{Related Work}
\label{sect:related_work}

\textbf{Geo-I based obfuscation}. The discussion of location privacy criteria can date back to almost two decades ago, when Gruteser and Grunwald \cite{Gruteser-MobiSys2003} first introduced the notion of \emph{location $k$-anonymity} on the basis of Sweeney's well-known concept of \emph{$k$-anonymity} for data privacy \cite{Sweeney2002}. Location $k$-anonymity was originally used to hide a  user's identity  in LBS \cite{Zheng-IEEEAccess2018}. This notion has been extended to obfuscate location by means of \emph{$l$-diversity}, i.e., a user's location cannot be distinguished with other $l-1$ locations \cite{Wang-VLDB2009,Yu-NDSS2017}.
However, $l$-diversity is hard to achieve in many applications as it assumes dummy locations are equally likely to be the real location from the attacker's view \cite{Andres-CCS2013,Yu-NDSS2017}.  \looseness = -1

In recent years, the privacy notion \emph{Geo-Ind}
\cite{Andres-CCS2013} which was first by introduced by Andres et. al, and many obfuscation strategies based on it (e.g., \cite{Shokri-SP2011,Shokri-CCS2012,Andres-CCS2013,Wang-WWW2017,Bordenabe-CCS2014,Ardagna-TDSC2011,Shokri-SP2011, Yu-NDSS2017, Qiu-CIKM2020}) have been used for location obfuscation. As these strategies inevitably introduce errors to users' reported locations, leading to a quality loss in LBS, a key issue that has been discussed in those works is how to trade off QoS and privacy. Many existing works follow a global optimization framework: given the Geo-I constraints, an optimization model is formulated to minimize the quality loss caused by obfuscation  \cite{Shokri-CCS2012,Fawaz-Security2015,Fawaz-CCS2014,Wang-WWW2017}. 
We now cover the related work closer to our work by categorizing them into tree-based approaches to obfuscation and policy-based approaches to customization.

\vspace{0.05in}
\noindent \textbf{Tree/hierarchy based approaches to location obfuscation}. 
\cite{ahuja2019utility} uses a hierarchical grid to overcome the computational overhead of optimal mechanisms. They first construct a hierarchical grid with increasing granularity as one traverses down the index with the highest granularity at leaf nodes (similar to our approach).  Second, they allocate the privacy budget ($\epsilon$) appropriately to these different levels using sequential composition.
In order to generate the obfuscated location, they start at the root node containing the real location of the user and go down the tree by recursively using the output of the obfuscation function at the prior level.
The main difference between our approach and theirs is, they partition the privacy budget for each level in the grid, while ours, no matter from top to bottom or bottom to up (increase or decrease precision), uses the maximum privacy budget. 
In \cite{tao2020differentially}, the authors present a tree-based approach for differentially private online task assignments for crowdsourcing applications.
They construct a Hierarchically well-Separated Tree (HST) based on a region that is published to both workers and task publishers who use it in order to obfuscate worker and task locations respectively.
They show that this tree-based approach achieves $\epsilon$\textit{-Geo-Ind} while minimizing total distance (maximizing utility) for task assignments.
However, their approach relies on workers and task publishers using the same HST and obfuscation function in order to effectively perform task assignments and is not geared toward allowing users to customize the obfuscation functions.
Other hierarchical-based approaches to spatial data such as \cite{cormode2012differentially, 10.1145/3474717.3483943} focus on private release population statistics or histograms.

\vspace{0.05in}
\noindent \textbf{Policy based approach to privacy}. Blowfish privacy proposed by \cite{he2014blowfish} uses a policy graph to determine the set of neighbors that users want to mark as sensitive. A policy graph encodes the user's preferences about which pairs of values in the domain of the database should be indistinguishable for an adversary. Thus, it allows users to tradeoff privacy for utility by restricting the indistinguishability set. Blowfish works for statistical queries and not location queries.
\cite{cao2020pglp} extended blowfish privacy and applied it to location privacy where the nodes and edges in the policy graph represent possible locations of the user and the indistinguishability requirements respectively.
Their goal is to ensure $\epsilon$-\textit{Geo-Ind} for any two connected nodes in the graph and to achieve this they apply DP-based noise to latitude and longitude independently. 
Their approach is best suited for category-based privacy i.e., indistinguishability among multiple locations of the same category (e.g., restaurants) as specifying pairwise indistinguishability between locations according to general user preferences is challenging.
Our customization model allows users to specify their preferences which then get translated to the parameters in generating the obfuscation function reducing the overhead on the user in terms of specification.
Furthermore, \cite{cao2020pglp} does not allow users to choose the granularity at which their location is shared as the graph model doesn't capture the natural hierarchy of locations. \cite{asada2019hide} proposed an approach to recommend location privacy preferences based on place and time (similar to our customization policies) using local differential privacy. Their work is complementary to ours and could be used to help users in coming up with their user preferences.

\vspace{-0.0in}
\section{Conclusions}
\label{sect:conclusions}

We developed \systemName{}, a framework for generating customizable obfuscation functions with strong privacy guarantees via Geo-Indistinguishability. \systemName{} includes a location tree and a policy model to assist users in specifying their customization parameters. \systemName{} includes user and server side interactions for efficiently generating a robust matrix. Experimental results show that \systemName{} effectively balances privacy, utility, and customization.

\balance
\bibliographystyle{IEEEtran}
\bibliography{references.bib}

\end{document}